\DeclareMathAlphabet{\pazocal}{OMS}{zplm}{m}{n}
\let\oldReturn\Return
\renewcommand{\Return}{\State\oldReturn}
\pgfplotsset{compat=1.5}
\newtheorem{theorem}{Theorem}
\newtheorem{lemma}[theorem]{Lemma}
\newtheorem{defn}[theorem]{Definition}
\newtheorem{prop}[theorem]{Proposition}
\newcommand{\Mb}{\pazocal{M}}
\newcommand{\Ma}{\pazocal{M}}
\newcommand{\sig}{\sigma_{e}(I, F)}
\newcommand{\EP}{\hspace*{\fill} {\boldmath $\square$ } \bigskip }
\numberwithin{theorem}{section}
\newcommand{\remove}[1]{}
\def \cN{{\cal N}}
\def \cC{{\cal C}}
\def \cM{{\Mb}}
\def \cMb{{{\Mb}(\beta)}}
\def \cS{{\cal S}}
\def \*{\star}
\def \10n{\!\!\!\!\!\!\!\!\!\!}
\def \te{{\tau_\epsilon}}
\def \mp {{M^\prime}}
\def \tm {{\tilde{M}}}
\def\bbordermatrix#1{\begingroup \m@th
  \@tempdima 4.75\p@
  \setbox\z@\vbox{%
    \def\cr{\crcr\noalign{\kern2\p@\global\let\cr\endline}}%
    \ialign{$##$\hfil\kern2\p@\kern\@tempdima&\thinspace\hfil$##$\hfil
      &&\quad\hfil$##$\hfil\crcr
      \omit\strut\hfil\crcr\noalign{\kern-\baselineskip}%
      #1\crcr\omit\strut\cr}}%
  \setbox\tw@\vbox{\unvcopy\z@\global\setbox\@ne\lastbox}%
  \setbox\tw@\hbox{\unhbox\@ne\unskip\global\setbox\@ne\lastbox}%
  \setbox\tw@\hbox{$\kern\wd\@ne\kern-\@tempdima\left[\kern-\wd\@ne
    \global\setbox\@ne\vbox{\box\@ne\kern2\p@}%
    \vcenter{\kern-\ht\@ne\unvbox\z@\kern-\baselineskip}\,\right]$}%
  \null\;\vbox{\kern\ht\@ne\box\tw@}\endgroup}
\newcommand{\prob}{\rm Prob}
\begin{document}
\title{\LARGE \bf Rapidly Mixing Markov Chain Monte Carlo Technique for Matching Problems with Global Utility Function}

\author{Shana~Moothedath,
        Prasanna~Chaporkar
        and~Madhu~N.~Belur
\thanks{The authors are in the Department of Electrical Engineering, Indian Institute of Technology Bombay, India. Email: $\lbrace$shana, chaporkar, belur$\rbrace$@ee.iitb.ac.in.}}


\maketitle

\begin{abstract}
This paper deals with a complete bipartite matching problem with the objective 
of finding an optimal matching that maximizes a certain generic predefined utility function on
the set of all matchings. 
After proving the NP-hardness of the problem using reduction from the 3-SAT problem, 
we propose a randomized algorithm based on Markov Chain Monte Carlo (MCMC) technique for 
solving this. 
We sample from Gibb's distribution and construct a reversible positive recurrent discrete time Markov chain (DTMC) that has the steady state distribution same as the Gibb's distribution. 
In one of our key contributions, we show that the constructed chain is `rapid mixing', i.e. 
the convergence time to reach within a specified distance to the desired distribution 
is polynomial in the problem size. 
The rapid mixing property is established by obtaining a lower bound on the conductance
of the DTMC graph and this result is of independent interest.
\end{abstract}

\begin{IEEEkeywords}
Allocation/matching problem, Markov Chain Monte Carlo, rapid mixing Markov chain, conductance.
\end{IEEEkeywords}

\IEEEpeerreviewmaketitle

\section{Introduction} \label{sec:intro}
Bipartite graphs arise in many applications. Matchings between 
two types of elements, like men/women, or jobs/machines are naturally
bipartite. Other areas include for example structural inconsistency detection in large electrical network \cite{ReiFel:03}, programmable logic arrays (PLAs) \cite{Hu:87} and electronic design automation \cite{WanChaChe:09}. Often, in a
bipartite graph, it is essential to find an optimal matching, one for which a certain suitable utility function has to be maximized.  This paper deals with this problem. However, the bipartite matching problem with global utility maximization turns out to be NP-hard. To this end, we propose a randomized algorithm using Markov Chain Monte Carlo technique for solving this. We use conductance \cite{Gur:00} for proving the rapid mixing of the chain. The conductance (for a graph with/without weights) gives an indication of how fast a random walk on a graph converges to a stationary distribution \cite{JerSin:89}. 

Conductance is one of many ways to quantify connectedness in a graph.
In the context of a Markov chain, well-connectedness results in, loosely
speaking and under suitable assumptions, a faster convergence to the stationary distribution. 
Network connectivity plays a key role in many applications. For example,
in circuits, one needs well-connectedness to have more 
reliability/redundancy against disconnectedness of some components. 
In traffic networks, high connectivity ensures more robustness against traffic-jams and delays. 
The approach followed in this paper for finding an optimal matching is by constructing
a Markov chain and then by ensuring that this chain is rapidly mixing, i.e. well-connected
in a suitable sense.  In what follows, we formulate the problem precisely and
then consider a few applications.

\remove{
Network connectivity plays a key role in many applications. For example,
in circuits, one needs well-connectedness to have more 
reliability/redundancy against disconnectedness of some components. 
In traffic networks, high connectivity ensures more robustness against jams and delays. Some measures of quantifying how well-connected or `well-knit' a graph is, is the notion of `conductance' of a graph. The conductance (for a graph with/without weights) gives an indication of how fast a random walk on a graph converges to a stationary distribution \cite{JerSin:89}. Another notion of connectivity is the one
introduced by Fiedler (1972) called the algebraic connectivity: this is defined as the second smallest eigenvalue of the Laplacian of a weighted/unweighted  undirected graph. It is interesting to note the crucial link between this notion of connectivity and the so-called Resistance distance matrix \cite{BapPat:98}. The link between the algebraic connectivity and mixing time of Markov
chains has been noted in Boyd and Xiao and also in Ghosh/Boyd and Saberi where they relate average hitting/commute time between nodes with the total effective resistance of the corresponding graph \cite{GhoBoySab:08}.

In the context of a Markov chain, well-connectedness results in, loosely
speaking and under suitable assumptions, a faster convergence to the stationary distribution. This paper deals with using Markov chains for finding a best matching in bipartite graphs.
Bipartite graphs arise naturally in many applications. Matchings between 
two types of elements, like men/women, or jobs/machines are naturally
bipartite. Other areas include for example structural inconsistency detection in large electrical network \cite{ReiFel:03}, programmable logic arrays (PLAs) \cite{Hu:87} and electronic design automation \cite{WanChaChe:09}. Often, in a
bipartite graph, it is essential to find an optimal matching, one for which a certain suitable utility function has to be maximized.  This paper deals with this problem. We make this precise and then consider a few applications.
}

Consider a complete bipartite graph $G = ((V_1 \cup V_2),E)$. The graph is complete in a sense that
$(i,j) \in E$ for every $i\in V_1$ and $j\in V_2$. Let $|V_1| =m$ and $|V_2|=n$, and without loss
of generality $m \leqslant n$. A matching $M$ in $G$ is a collection of edges (subset of $E$) 
such that no two edges in the collection share the same endpoint, i.e. for any $(i,j)$ and $(u,v) \in M$, we have
$i \not= u$ and $j \not= v$. A matching $M \subseteq E$ is said to be perfect if for any $(i,j) \not\in M$,
$\{(i,j)\} \cup M$ is not a matching. 
Note that for any perfect matching $M$, $|M| = m$.
Let $\cN$ denote the set of all perfect matchings in $G$.
Now, consider a real valued function $U: \cN \to \Re$, where $\Re$ is the set of real numbers.
The function $U$ can be thought as assigning utility to each perfect matching. 
Our aim is to find the perfect matching $M^\*$ that maximizes the utility.
Specifically, we wish to solve the following optimization:
\[ M^\* = \arg\max_{\10n M \in \cN} U(M). \] 
We do not consider any specific structure on $U(\cdot)$.
However, we assume that given any perfect matching $M$,
$U(M)$ can be computed in time polynomial in $m$.

Note that the bipartite matchings is a very well studied problem on account of its
usefulness in modelling, among many others, scheduling and resource allocation problems. 
In these explorations, mostly some structure on $U(\cdot)$ is assumed. For example,
it is assumed that each edge $e$ in $E$ is associated with a non-negative real number called weight, say $w_e$.
Here, $U(M) = \sum_{e\in M} w_e$. 
Note that for a given $M$, $U(M)$ can be computed in $O(m)$.
In this settings, $M^\*$ is called {\it maximum weighted matching}.
The Hungarian algorithm can be used to obtain the maximum weighted matching in time complexity
$O(n^4)$ \cite{Kuh:10}. 
One other matching problem studied extensively is the {\it stable matching problem}.
Here, each node in $V_1$ ($V_2$, resp.) give preference for each node in $V_2$ ($V_1$, resp.).
A perfect matching is called stable if, broadly, there does not exist any pairing $(i,j)$
such that both $i$ and $j$ prefer each other more than the nodes they are currently matched to \cite{GalSha:62}. Here, note that we can define $U(M)$ to be 0 if $M$ is not a stable matching and 1 otherwise.
Note that for a given $M$, $U(M)$ can be computed in $O(m^2n^2)$. Also, $M^\*$ gives the stable matching. 
Thus, our problem here is a generalized version of well studied matching problems.
Note that the aforementioned well studied problems are not enough to model many 
resource allocation and scheduling problems. We give few examples to demonstrate this.

\subsection{Job Scheduling}
Consider a job scheduling problem where $m$ jobs need to be scheduled on $K$ machines.
Each job can be scheduled on any of the available machines.
Let $s_{ij}$ denote the service time of job $i$ on machine $j$.
Let $\cS$ denote the set of all jobs.
A job scheduling $\Delta$ is a partition of $\cS$ into at most $K$ subsets,
say $\cS_1^\Delta,\ldots,\cS_K^\Delta$, where
$\cS_j$ denote the set of jobs scheduled on machine $j$.
Denote by $T_j^\Delta$ the machine $j$'s make-span under $\Delta$ which is defined as follows:
\[ T_j^\Delta = \sum_{i \in \cS_j^\Delta} s_{ij}. \]
The system's make-span under $\Delta$, say $T^\Delta$, is defined as
$T^\Delta = \displaystyle \max_{1\leqslant j \leqslant K} T_j^\Delta$.
The aim is to find an optimal job scheduling $\Delta^\*$ such that
$T^{\Delta^\*} \leqslant T^\Delta$ for every $\Delta$. Now, we show that this problem can be
addressed in our framework. 

Construct a bipartite graph as follows: set $V_1=\cS$, i.e. $V_1$ is the set of all jobs.
Thus, $|V_1| = m$. Set $V_2$ has $mK$ nodes. Any perfect matching
$M$ can be mapped to a job scheduling as follows. 
Construct $\cS_j(M) = \{i \in \cS \ : \ (i,u) \in M \mbox{ and } j = \lceil u/m \rceil \}$.
Now, $U(M)$ can be computed as 
\[ U(M) = -\max_j \sum_{i \in \cS_j(M)} s_{ij}. \]
Note that $U(M)$ can be computed in $O(m)$. Also, note that $M^\*$ corresponds to $\Delta^\*$. Thus, the job scheduling problem can
be addressed in our framework.

\subsection{Graph Colouring}
Consider a graph $G_c = (V,E)$. Let $|V| = m$.
The graph colouring problem deals with assigning  colours to each vertex in such a way that
vertices $i$ and $j$ do not have the same colour  if $(i,j)\in E$. 
Here, we are interested in determining whether the given graph can be coloured with at most
$K$ colours \cite{Die:00}. 
The colouring problem is used, among many other things, for frequency planning in wireless networks.
Next we describe how this problem can be addressed in our framework.

Construct a bipartite graph as follows:
assign $V_1 = V$ and $V_2$ has $mK$ nodes.
Any perfect matching $M$ yields a graph colouring by
assigning colour $\lceil u/m \rceil$ to node $i$ if $(i,u) \in M$.
The utility function $U(M)$ equals $c$ if $M$ yields a valid colouring in the original graph and
equals $-c$ otherwise where $c$ is a positive real number.
Note that if $U(M^\*) = c$, then we can conclude that the graph can
be coloured with at most $K$ colours. Also, note that for any given
matching $M$, its utility can be obtained in $O(m^2)$. Thus, 
the proposed framework can address the colouring problem.

\subsection{Multiple Knapsack Problem}
The multiple knapsack problem is a generalization of the single knapsack problem.
Let there be $m$ items and $K$ knapsacks. The volume of the $i^{\rm th}$ item is $c_i > 0$
and the volume of the  $j^{\rm th}$ knapsack  is $C_i$.
Denote by $r_{ij} > 0$ the reward we obtain if an item $i$ is put in the $j^{\rm th}$ knapsack.
The aim is to assign items to knapsacks. Let $\cS_j^\Delta$ denote the set of items assigned to
knapsack $j$ under assignment policy $\Delta$. 
The assignment $\Delta$ is said to be feasible if $\sum_{i\in\cS_j^\Delta} c_i \leqslant C_j$ for every $j$.
For any feasible $\Delta$, the reward $R^\Delta = \sum_{j} \sum_{i\in \cS_j^\Delta} r_{ij}$.
Our aim is to find a feasible policy $\Delta^\*$ such that $R^{\Delta^\*} \geqslant R^\Delta$ 
for every feasible $\Delta$. Next we describe how this problem can be addressed in our framework.

Consider a bipartite graph with $V_1$ as the set of all items and $V_2$ is the set of $mK$ nodes.
A perfect matching $M$ in this bipartite graph is mapped to items assignment to the knapsack as follows:
item $i$ is assigned to the $\lceil j/m \rceil$ if $(i,j) \in M$. Define $\cS_j(M)$ to be the set of 
all items $i$ assigned to the $j^{\rm th}$ knapsack under matching $M$. Function $U(M) = -\kappa$ if the assignment
is not feasible, else $U(M) =  \sum_{j} \sum_{i\in \cS_j(M)} r_{ij}$. Note that $U(M)$ can be computed in
$O(m^2)$. Now, observe that $M^\*$ corresponds to $\Delta^\*$.

The examples above demonstrate usefulness of the framework we consider here.
Our aim is to design efficient algorithms to obtain $M^\*$. 
Unfortunately, the problem of finding optimal matching is NP-hard.
We prove the hardness in the next section.
Since the polynomial time algorithms for finding $M^\*$ may not exist,
we consider a randomized algorithm based on Markov Chain Monte Carlo \cite{AndDefDouJor:03}.
Key idea here is to sample efficiently from Gibb's distribution $\exp\{\beta U(M)\}/C$, 
where the partition function, $C =  \sum_{M\in \cN} e^{\{\beta U(M)\}}$ for $\beta > 0$. Note that the
distribution concentrates at the optimal perfect matchings as $\beta \to \infty$.
In MCMC, we construct  a reversible positive recurrent discrete time Markov chain (DTMC)
that has the steady state distribution same as the required Gibb's distribution.
Making the sampling from the required distribution efficient is equivalent to ensuring
that the MCMC Markov chain is rapid mixing. Broadly, the rapid mixing implies that after
running DTMC for steps polynomial in $n$, observed empirical distribution is very close to
the steady state distribution.
In our main result, we establish rapid mixing of the constructed DTMC. 
Our key contributions are summarized below.

\noindent
$\bullet$ We prove the NP-hardness of the problem.

\noindent
$\bullet$ We propose MCMC based randomized algorithm to find~$M^\*$.

\noindent
$\bullet$ We show that the constructed DTMC is rapid mixing for any given value of $\beta > 0$.

\noindent
$\bullet$ The bound on the conductance of the graph corresponding to the constructed DTMC may be
of independent interest.  \\

\noindent
 The organization of this paper is as follows: the complexity of the problem, that is, the NP-hardness is discussed in Section \ref{sec:comp}. 
The proposed algorithm for solving the problem at hand, the construction of the Markov chain and its transitions are detailed in Section \ref{sec:markov}. 
Section \ref{sec:rapid} deals with details of rapid mixing Markov chains and the tools used in this paper for establishing rapid mixing property. 
The main result of this paper, that is, the rapid mixing property of the Markov chain associated with the concerned problem is given in Section \ref{sec:mainresult}.  Section \ref{sec:conclu} contains concluding remarks and future directions. 
Proofs of few preliminary results is given in the Appendix.

\section{Complexity of the Problem}\label{sec:comp}
In this section, we discuss the computational complexity of the problem under consideration.
Note that we have a combinatorial optimization problem with no real structure apparent on the utility function.
The total number of perfect matchings in the bipartite graph is $O(n^m)$.
Note that the total number of matchings may not be polynomial in $m$ (or $n$) if $n$ is $O(m)$ which is the case in
the examples we described. Thus, the exhaustive search may not be computationally feasible for the cases in which
$n$ is increasing with $m$.
Next, we formally show that the allocation problem with global utility is NP-hard.

To show the problem is NP-hard, we use reduction from three-conjunctive normal form satisfiability 
(aka 3-CNF-SAT) problem.
We first describe 3-CNF-SAT problem for the sake of completeness.
This problem deals with determining satisfiability of a boolean formula that
involves $m$ variables and $K$ clauses, where $K$ is $O(m)$.
There are three boolean operators used, AND (denoted as `$\wedge$'), OR (denoted as `$+$') and complementation.
The clauses are joined with AND operators and variables in each clause are joined by OR operator.
Each clause has three variables either as it is or complemented.  
For example $(y_{1} + y_{2} +\bar{y}_{3})\wedge(y_{2} + y_{4} +\bar{y}_{7})\wedge (\bar{y}_{1} + y_{4} +\bar{y}_{8})$ is a 3-CNF. A 3-CNF is said to be satisfiable, if there exists assignment of binary value to each variable
such that the formula evaluates to 1. The 3-CNF-SAT is a known NP-complete problem \cite{CorLeiRivSte:01}.
Now, we prove that our problem is also NP-hard.

\begin{theorem} \label{th:npcomplete}
The allocation problem with global utility maximization is NP-hard.
\end{theorem}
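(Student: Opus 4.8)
The plan is to establish NP-hardness by exhibiting a polynomial-time reduction from 3-CNF-SAT, which the excerpt has just recalled as NP-complete. Given a 3-CNF formula $\phi$ over $m$ variables $y_1,\ldots,y_m$ with $K = O(m)$ clauses, I would construct in polynomial time an instance of the allocation problem (a complete bipartite graph together with a utility function $U$) whose optimal matching encodes a satisfying assignment precisely when one exists. The natural choice, guided by the colouring and knapsack examples in the introduction, is to take $K=2$ colours available per variable so that a perfect matching selects a truth value for each variable; concretely I would set $V_1$ to index the $m$ variables and let $V_2$ consist of $2m$ nodes, so that for each variable $i$ the incident edge in a perfect matching lands in one of two blocks, interpreted as assigning $y_i=1$ or $y_i=0$.

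The key steps, in order, are as follows. First I would make the encoding precise: a perfect matching $M$ maps to a binary assignment $a(M) \in \{0,1\}^m$ via the block (i.e. $\lceil u/m \rceil$) into which each variable's edge falls, and conversely every assignment arises from some matching, giving a surjection from $\cN$ onto the set of truth assignments. Second, I would define the utility function so that $U(M)$ rewards satisfied clauses: a clean choice is $U(M) = $ (number of clauses of $\phi$ satisfied by $a(M)$), or equivalently an indicator $U(M)=1$ if $a(M)$ satisfies $\phi$ and $U(M)=0$ otherwise, mirroring the stable-matching construction in the introduction where $U$ is a $0/1$ feasibility indicator. Third, I must verify the polynomial-time computability condition that the framework demands: given $M$, evaluating each clause and hence $U(M)$ takes $O(K)=O(m)$ time, so the reduction respects the standing assumption that $U$ is polynomially computable. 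Fourth, I would argue correctness in both directions: $\phi$ is satisfiable if and only if $\max_{M\in\cN} U(M) = K$ (or $=1$ under the indicator formulation), since an optimal matching decodes to an assignment satisfying the maximum possible number of clauses. Because the graph has $m + 2m$ vertices and the reduction is computed in time polynomial in the size of $\phi$, a polynomial-time algorithm for the allocation problem would yield one for 3-CNF-SAT, establishing NP-hardness.

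The main obstacle I anticipate is not the logical core of the reduction but making the variable-to-matching encoding genuinely well-defined and bijective at the level of truth assignments. A perfect matching assigns \emph{every} node of $V_1$ an edge, but I need each variable's chosen block to be readable independently, and I must ensure that the freedom in how $V_2$ nodes within a block are matched does not corrupt the decoded assignment or the value of $U$. One must confirm that $U$ depends only on the block structure $a(M)$ and not on the finer matching detail, and that every one of the $2^m$ assignments is realizable by at least one perfect matching so that the maximization over $\cN$ ranges over all assignments. Handling the bookkeeping of indices (the $\lceil u/m\rceil$ convention, the case $m \leqslant n$ with $n=2m$ here) cleanly is where care is needed, but it is routine once the block interpretation is fixed.
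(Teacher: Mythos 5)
Your proposal is correct and follows essentially the same route as the paper: a polynomial-time reduction from 3-CNF-SAT using a complete bipartite graph with the $m$ variables on one side and $2m$ nodes (two blocks encoding bits $0$ and $1$) on the other, with $U$ a $0/1$ indicator of whether the decoded assignment satisfies the formula. Your additional care about well-definedness of the decoding and realizability of every assignment is a point the paper's proof glosses over, but it is routine here and does not constitute a different argument.
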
 
\begin{proof}
In order to prove the problem is NP-hard we use reduction from 3-CNF-SAT.
A 3-CNF satisfiability problem is represented as
\begin{align*}
V = \lbrace y_{1}, y_{2}, \cdots, y_{m}  \rbrace,
\end{align*}
and a 3-CNF boolean formula $f(V)$.
To find an analogy between the 3-CNF-SAT problem and the allocation problem with global utility maximization, we consider the following. Consider a bipartite graph $G((V_1 \cup V_2), E)$ with $V_1 = V$ number of nodes on one side and $|V_2| = 2m$ number of nodes on the other side as shown in Figure \ref{fig:bipartite}. 

\definecolor{myblue}{RGB}{80,80,160}
\definecolor{mygreen}{RGB}{80,160,80}
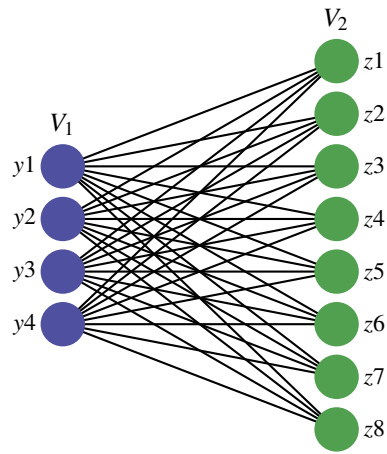
\begin{figure} 
\begin{center}
\begin{tikzpicture} [scale = 0.28]
       \draw [thick] (-7,-5)  --   (6,0);
       \draw [thick] (-7,-5)  --   (6,-2.5);
       \draw [thick] (-7,-5)  --   (6,-5);
       \draw [thick] (-7,-5)  --   (6,-7.5);
       \draw [thick] (-7,-5)  --   (6, -10);
       \draw [thick] (-7,-5)  --   (6, -12.5); 
       \draw [thick] (-7,-5)  --   (6, -15);
       \draw [thick] (-7,-5)  --   (6, -17.5);
       
       \draw [thick] (-7,-7.5)  --   (6,0);
       \draw [thick] (-7,-7.5)   --   (6,-2.5);
       \draw [thick] (-7,-7.5)  --   (6,-5);
       \draw [thick] (-7,-7.5)  --   (6,-7.5);
       \draw [thick] (-7,-7.5)   --   (6, -10);
       \draw [thick] (-7,-7.5)  --   (6, -12.5);
       \draw [thick] (-7,-7.5)  --   (6, -15);
       \draw [thick] (-7,-7.5)   --   (6, -17.5);
       
       \draw [thick] (-7,-10.0)  --   (6,0);
       \draw [thick] (-7,-10.0)  --   (6,-2.5);
       \draw [thick] (-7,-10.0)   --   (6,-5);
       \draw [thick] (-7,-10.0)  --   (6,-7.5);
       \draw [thick] (-7,-10.0)  --   (6, -10);
       \draw [thick] (-7,-10.0)   --   (6, -12.5);
       \draw [thick] (-7,-10.0)  --   (6, -15);
       \draw [thick] (-7,-10.0)  --   (6, -17.5);
       
       \draw [thick] (-7,-12.5)  --   (6,0);
       \draw [thick] (-7,-12.5)  --   (6,-2.5);
       \draw [thick] (-7,-12.5)   --   (6,-5);
       \draw [thick] (-7,-12.5)  --   (6,-7.5);
       \draw [thick] (-7,-12.5)  --   (6, -10);
       \draw [thick] (-7,-12.5)   --   (6, -12.5);
       \draw [thick] (-7,-12.5)  --   (6, -15);
       \draw [thick] (-7,-12.5)  --   (6, -17.5);     
          
          \node at (-8.8,-5) {\small $y1$};
          \node at (-8.8,-7.5) {\small $y2$};
          \node at (-8.8,-10.0) {\small $y3$};
          \node at (-8.8,-12.5) {\small $y4$};
          \fill[myblue] (-7,-5) circle (30.0 pt);
          \fill[myblue] (-7,-7.5) circle (30.0 pt);
          \fill[myblue] (-7,-10) circle (30.0 pt);
          \fill[myblue] (-7,-12.5) circle (30.0 pt);
          \node at (7.8,0) {\small $z1$};
          \node at (7.8,-2.5) {\small $z2$};
          \node at (7.8,-5) {\small $z3$};
          \node at (7.8,-7.5) {\small $z4$};
          \node at (7.8,-10) {\small $z5$};
          \node at (7.8,-12.5) {\small $z6$};
          \node at (7.8,-15) {\small $z7$};
          \node at (7.8,-17.5) {\small $z8$};         
          \fill[mygreen] (6,0) circle (30.0 pt); 
          \fill[mygreen] (6,-2.5) circle (30.0 pt);
          \fill[mygreen] (6,-5) circle (30.0 pt);
          \fill[mygreen] (6,-7.5) circle (30.0 pt); 
          \fill[mygreen] (6,-10) circle (30.0 pt);
          \fill[mygreen] (6,-12.5) circle (30.0 pt);
          \fill[mygreen] (6,-15) circle (30.0 pt);
          \fill[mygreen] (6,-17.5) circle (30.0 pt);
          \node at (-7.0,-3) {$V_1$};
          \node at (6,2) {$V_2$};
\end{tikzpicture}
\caption{The bipartite graph representation of 3-CNF satisfiability problem}\label{fig:bipartite}
\end{center}
\end{figure}
Out of the $2m$ nodes in the set $V_2$, first $m$ nodes $\{z_0,\ldots,z_{m-1}\}$ correspond to bit 0 
and the latter $m$ nodes $\{z_m,\ldots,z_{2m-1}\}$ correspond to bit 1. 
The graph $G$ shown in Figure \ref{fig:bipartite} is said to be complete since there exists edges from all nodes in set $V_1$ to all nodes in set $V_2$. 
Thus the 3-CNF satisfiability problem can be considered as a matching problem, where the allocation of each variable node $y_{i}$ to a node in $V_2$ can be associated with either 0 or 1  and this value is the value of that variable. To be more clear,  let $g(y_{i})$ denote the value of vertex in $V_2$ to which $y_{i}$ is matched. It can be either 0 or 1. The utility function $U$ for this is defined as 
\begin{equation}
U(M) = f(\bar{g}(\bar{y})) \in \lbrace 0, 1 \rbrace.
\end{equation}
The problem definition is find $M^{\*}$ such that
\begin{equation*}
M^\* = \arg\max_{\10n M \in \cN} U(M). 
\end{equation*}
 Here $U_{\mathrm{min}} = 0 $ and $ U_{\mathrm{max}} = 1$ and if $U(M^{\*}) = 1$, then $f$ is satisfiable, otherwise $f$ is not satisfiable. Thus the 3-CNF satisfiability problem can be reduced to a matching problem with global utility maximization in polynomial time. This proves the required.
\end{proof}

\noindent {\bf Remark}:
 The utility maximization with global utility is an NP-hard problem 
 even when the utility function is bounded, i.e. $|U(M)| \leqslant c$ for some $c > 0$.
 
On account of the NP hard nature of the problem, we can not have polynomial time algorithm to find
an optimal matching unless P=NP. There are two main approaches used to solve these problems:
approximation algorithm and randomized algorithms. In this paper, we focus on the randomized algorithm.
In the next section, we discuss our proposed algorithm.

\section{Proposed Randomized Method}\label{sec:markov}
We first present our randomized algorithm and then state some preliminary results.
\subsection{Proposed Algorithm}
The pseudo code for the algorithm is presented in Algorithm~\ref{alg:mcmc}.
We start with a random perfect matching from $\cN$, say $M_0$.
We run the algorithm for $T$ steps. In step $t$,  
we randomly choose a perfect matching $M_t \in \cN$ where the choice distribution
depends on $M_{t-1}$.  
Select $y \in V_1$ and $z \in V_2$ uniformly at random. Note that since we assume that $m \leqslant n$, in any perfect matching
node $y$ is always matched to some node in $V_2$. As for $z$, there are three possibilities:
(a)~$z$ is matched to $y$, i.e. the edge $(y, z)$ is already a part of the matching $M_{t-1}$.
(b)~$z$ is not matched, and $(y,z_1)$ is  a part of the matching $M_{t-1}$ for some $z_1 \in V_2$.
(c)~$z$ is matched to $y_1$, i.e. $(y,z_1)$ and $(y_1,z)$ both belong to $M_{t-1}$.
In each of the possible cases, we do the following:
In case (a), we do nothing and retain the same matching as before. Thus, $M_t = M_{t-1}$ in this case.
In case (b), we construct a perfect matching $\tilde{M}_{t}$ by removing edge $(y,z_1)$ and then
adding the  edge $(y,z)$ to  $M_{t-1}$. Similarly in case (c), we construct $\tilde{M}_{t}$ by removing
edges $(y,z_1)$ and $(y_1,z)$, and adding edges $(y,z)$ and $(y_1,z_1)$ to $M_{t-1}$. Once
the matching $\tilde{M}_{t}$ is constructed, we compute $U(\tilde{M}_{t})$. If $U(\tilde{M}_{t}) \geqslant U(M_{t-1})$,
then we let $M_t = \tilde{M}_{t}$; otherwise we only accept $\tilde{M}_{t}$ as a new choice with probability
$\exp\{\beta (U(\tilde{M}_{t})-U(M_{t-1}))\}$.
Note that in each iteration of the proposed algorithm, we randomly pick a matching from the neighborhood 
of the current one, and propose to use it. If the chosen matching has equal or more utility than that of the current
one, then we accept it as a new one, else we accept it only probabilistically with probability depending on the utility of
the proposed and the current matching. Specifically, closer the utility of the proposed matching to that of the existing one,
higher is the probability of accepting the proposed matching.
Next we prove some preliminary results.

\begin{algorithm}[t]
  \caption{Pseudo code for the proposed algorithm 
  \label{alg:mcmc}}
  \begin{algorithmic}
\State \textit {\bf Input:} Bipartite graph $G((V_1 \cup V_2, E)$, where $|V_1| = m, |V_2| = n$
\end{algorithmic}
  \begin{algorithmic}[1]
  \State  Initialize $V_1 = \{1,\cdots,m\}$ and $V_2 = \{1,\cdots,n\}$, step $t=0$ \label{step:initialize}
  \State Start with a random complete matching, say $M_0$
\While {$t \leqslant T$} \label{step:utilitycheck} 
    \State Select nodes $y \in V_1$ and $z \in V_2$   uniformly at random
    \If {$(y,z) \in M_t$}
	    \State  $\tilde{M}_{t+1} \leftarrow M_t$
	\ElsIf {$(y,z_1)\in M_t$ and $z$ is unmatched}
		\State $\tilde{M}_{t+1} \leftarrow (M_t \cup \{(y,z)\}) \setminus\{(y,z_1)\}$
	\ElsIf {$(y,z_1)\in M_t$ and $(y_1,z) \in M_t$}
		\State $\tilde{M}_{t+1} \leftarrow (M_t \cup \{(y,z),(y_1,z_1)\}) \setminus\{(y,z_1),(y_1,z)\}$
	\EndIf
	
\State Calculate $U(\tilde{M}_{t+1})$ 
\State $p = \min\{1,e^{\beta(U(\tilde{M}_{t+1})-U(M_t))}\}$
\State $M_{t+1} \leftarrow \tilde{M}_{t+1}$ w.p. $p$ and $M_{t+1} \leftarrow M_t$ otherwise
\State $t \leftarrow t+1$
\EndWhile \label{step:end}
\end{algorithmic}
\begin{algorithmic}
\State \textit{\bf Output:} Matching $M_T$
  \end{algorithmic}
\end{algorithm}

\subsection{Preliminary Results}
Fix parameter $\beta < \infty$. Let $X_t^\beta$ be a random variable that denotes the matching used by the proposed 
algorithm in $t^{\rm th}$ iteration for the given $\beta$. 
Consider the discrete time stochastic process $\cMb=\{X_t^\beta\}_{t\geqslant 0}$.
We prove the following two main results.

\begin{lemma} \label{lem:DTMC}
	Fix any $\beta < \infty$. The process $\cMb$ is a Discrete Time Markov Chain (DTMC) on $\cN$. 
	The DTMC is irreducible and aperiodic.
\end{lemma}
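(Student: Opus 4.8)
The plan is to verify the three assertions in turn: that $\cMb$ is a Markov chain on $\cN$, that it is aperiodic, and that it is irreducible. The first two follow directly from the construction of Algorithm~\ref{alg:mcmc}, whereas irreducibility needs an argument based on symmetric differences of matchings, and this is the step I expect to be the main obstacle.

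First I would establish the Markov property and confirm the state space. By inspection of the algorithm, the next state $X_{t+1}^\beta$ is obtained from $X_t^\beta$ by drawing the pair $(y,z)$ uniformly at random and then accepting or rejecting a single candidate $\tm_{t+1}$, where both the proposal and the accept/reject coin depend only on the current matching $X_t^\beta$ (through $U(X_t^\beta)$ and the local edge structure at $y,z$) together with fresh randomness independent of the past. Hence $\Pr(X_{t+1}^\beta = \mp \mid X_t^\beta, X_{t-1}^\beta, \ldots) = \Pr(X_{t+1}^\beta = \mp \mid X_t^\beta)$, which is the Markov property, and since none of this depends on $t$ the kernel is time-homogeneous. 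I would also check that each of the constructions in cases (a)--(c) maps a perfect matching to a perfect matching (in case (b) the free node $z$ takes over the edge vacated at $z_1$; in case (c) the two $V_2$-endpoints are exchanged), so that the chain indeed lives on $\cN$.

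Aperiodicity I would obtain by exhibiting a self-loop at every state. For any $M \in \cN$ there is positive probability of selecting a pair $(y,z)$ with $(y,z)\in M$, i.e.\ case (a), in which case $X_{t+1}^\beta = M$; moreover any proposed $\tm_{t+1}$ with $U(\tm_{t+1}) < U(M)$ is rejected with probability $1 - e^{\beta(U(\tm_{t+1})-U(M))} > 0$ because $\beta < \infty$. Either way $P(M,M) > 0$, so every state has period $1$; combined with irreducibility this yields aperiodicity.

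The crux is irreducibility: I would show that for any two perfect matchings $M, \mp \in \cN$ there is a finite sequence of allowed moves, each of positive probability, carrying $M$ to $\mp$. The key tool is the symmetric difference $M \triangle \mp$, which in a bipartite graph decomposes into vertex-disjoint alternating paths and even alternating cycles. An alternating cycle can be resolved by a succession of case-(c) swaps that rotate the $V_2$-endpoints around the cycle, while an alternating path (which arises precisely when the set of $V_2$-nodes saturated by $M$ differs from that saturated by $\mp$) can be resolved using a case-(b) move at its free end followed by case-(c) swaps along its interior. Since $\beta < \infty$, every proposed move is accepted with strictly positive probability $\min\{1, e^{\beta \Delta U}\} > 0$, so the whole sequence has positive probability. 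The part that I expect to require the most care is ordering these elementary moves so that each intermediate configuration is still a perfect matching in $\cN$, in particular managing the saturated $V_2$-nodes correctly when alternating paths are present.
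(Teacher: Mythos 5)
Your proof is correct, and the Markov-property and state-space parts coincide with the paper's. The difference lies in irreducibility: the paper does not argue via alternating components at all, but instead invokes its canonical path construction (Algorithm~\ref{alg:canonicalpath}), which processes the nodes $y_1,\dots,y_m$ of $V_1$ in index order and, at each step, performs exactly one case-(b) or case-(c) move (or stays put), producing for every ordered pair $(I,F)$ a positive-probability path of length exactly $m$. Your route decomposes $I\oplus F$ into alternating paths and even cycles and resolves each component separately --- cycles by repeated case-(c) exchanges, paths by propagating the free $V_2$-endpoint with case-(b) moves. Both arguments are sound; yours is the more classical and self-contained one, while the paper's has the advantage of reusing machinery it needs anyway for the conductance bound (the canonical paths, their uniqueness, and their length $m$ all reappear in Lemma~\ref{lem:two} and Theorem~\ref{th:con}), so irreducibility comes essentially for free there. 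One small point in your favour: your aperiodicity argument is more robust than the paper's. The paper attributes the self-loops solely to rejected proposals, which would fail if every neighbour of a state had at least the same utility (e.g.\ constant $U$); your observation that case (a) --- drawing a pair $(y,z)$ already in $M$ --- yields $P_{MM}\geqslant 1/n>0$ unconditionally closes that gap.
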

The proof is presented in the Appendix.
Let $P(\beta) = [P_{M^\prime M}(\beta)]_{M,M^\prime \in \cN}$ denote the transition probability matrix
on the DTMC for a given $\beta$.
Note that since $\cN$ has finitely many elements, the above result states that the DTMC is
positive recurrent and hence admits the stationary distribution, say $\bm{\pi}(\beta) = [\pi_M(\beta)]_{M\in\cN}$.
Now, we characterize the steady state distribution.

\begin{lemma} \label{lem:reversible}
	Fix any $\beta < \infty$. The DTMC $\cMb$ is time reversible and for every $M \in \cN$
	\[\pi_M(\beta) = \frac{\exp\{\beta U(M)\}}{\sum_{M^\prime\in \cN}\exp\{\beta U(M^\prime)\}}~.\]
\end{lemma}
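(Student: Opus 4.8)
The plan is to verify the detailed balance equations, since establishing time reversibility together with the explicit form of the stationary distribution follows immediately once we show that $\bm{\pi}(\beta)$ as defined satisfies $\pi_M(\beta) P_{M' M}(\beta) = \pi_{M'}(\beta) P_{M M'}(\beta)$ for all pairs $M, M' \in \cN$. A standard fact is that if a probability vector $\bm{\pi}$ satisfies detailed balance with respect to a transition matrix $P$, then $\bm{\pi}$ is automatically stationary (summing the balance equations over $M'$ recovers $\bm{\pi} = \bm{\pi} P$) and the chain is reversible by definition. Since Lemma~\ref{lem:DTMC} already guarantees the DTMC is irreducible and aperiodic on the finite set $\cN$, the stationary distribution is unique, so it suffices to exhibit one vector satisfying detailed balance and identify it with the claimed Gibbs form.

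First I would read off the off-diagonal transition probabilities from Algorithm~\ref{alg:mcmc}. For two distinct perfect matchings $M$ and $M'$ that are neighbors (reachable in one step via a case-(b) or case-(c) move), the probability of proposing $M'$ from $M$ is a fixed proposal probability $q$ times the Metropolis acceptance factor $\min\{1, e^{\beta(U(M') - U(M))}\}$. The key structural observation is that the proposal mechanism is symmetric: selecting $(y,z)$ uniformly from $V_1 \times V_2$ induces the same proposal probability $q = \tfrac{1}{mn}$ for the forward move $M \to M'$ as for the reverse move $M' \to M$, because each such one-step transition corresponds to a unique selected pair $(y,z)$ and the inverse move is generated by an equally likely selected pair. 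I would argue this symmetry carefully for both the single-swap case (b) and the double-swap case (c), as this is the crux of the whole computation.

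With symmetry of the proposal in hand, detailed balance reduces to checking
\[
e^{\beta U(M)} \min\{1, e^{\beta(U(M') - U(M))}\} = e^{\beta U(M')} \min\{1, e^{\beta(U(M) - U(M'))}\},
\]
which is the familiar Metropolis identity: both sides equal $\min\{e^{\beta U(M)}, e^{\beta U(M')}\}$, so the unnormalized weights $w_M = e^{\beta U(M)}$ satisfy $w_M P_{M' M} = w_{M'} P_{M M'}$. Normalizing by the partition function $C = \sum_{M' \in \cN} e^{\beta U(M')}$ then yields exactly the claimed $\pi_M(\beta)$. I would treat the diagonal (self-loop) and non-neighbor ($P_{M M'} = 0$) cases as trivially balanced.

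The main obstacle I anticipate is the proposal-symmetry step, and in particular the bookkeeping for case (c), the double swap. Here one must confirm that if $M \to M'$ is realized by picking the pair $(y,z)$ with $(y,z_1),(y_1,z) \in M$, then the reverse transition $M' \to M$ is realized by picking a corresponding pair under $M'$ with the same uniform probability $\tfrac{1}{mn}$ — and that no double-counting arises from distinct selected pairs producing the same neighbor. Since $m \leqslant n$ guarantees every $y \in V_1$ is matched in every perfect matching, the case analysis is exhaustive and each neighbor differs from $M$ by a single well-defined edge swap, so I expect the symmetry to hold cleanly once the correspondence between selected pairs before and after the move is spelled out explicitly.
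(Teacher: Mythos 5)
Your proposal is correct and follows essentially the same route as the paper's own proof in the appendix: both verify the detailed balance condition $\pi_{M}P_{MM'}=\pi_{M'}P_{M'M}$ using the symmetry of the uniform proposal together with the Metropolis acceptance factor, the paper by splitting into the cases $U(M')\geqslant U(M)$ and $U(M')<U(M)$ and you equivalently via the identity $e^{\beta U(M)}\min\{1,e^{\beta(U(M')-U(M))}\}=\min\{e^{\beta U(M)},e^{\beta U(M')}\}$. Your explicit attention to proposal symmetry (including the possible factor-of-two in the double-swap case) is a point the paper glosses over by simply positing $P_{MM'}=\tfrac{1}{2mn}a_{MM'}$, but it does not change the argument since only symmetry of the proposal constant matters for detailed balance.
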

The proof is presented in the Appendix. Note that the steady state distribution is what we desired as
it concentrates on points $M \in \cN$ such that $U(M) = U(M^\*)$ as $\beta \rightarrow \infty$. Also,
since $P_{M^\prime M}^{(t)}(\beta) \to \pi_M(\beta)$ as $t\rightarrow \infty$ for every $M^\prime$ and $M \in \cN$,
it should be enough to run the algorithm for some ``large enough'' steps $T$ to be able to ``closely'' sample
from the distribution $\bm{\pi}(\beta)$. However, determining $T$ for the required sampling accuracy
is challenging. Next we discuss this in more details.

\remove{
 Associated with the bipartite graph of the problem, we now construct a Markov chain $\Ma$ where every state is a perfect matching and the state space $\mathcal{N}$ consists of all 
possible perfect matchings. The chain progresses in the state space by identifying simple operations that transform one state to another state. Transitions in $\Ma$ are defined as: in any state $M \in \mathcal{N}$, choose an edge $(u, v)$ uniformly at random, then the 3 types of transitions are:

\begin{enumerate} 
\item If $(u, v) \notin M$ and $(u, v')$, $(u', v) \in M$, then break edges $(u, v')$, $(u', v)$ and make edges $(u, v)$, $(u', v')$.
\item If $(u, v) \notin M$ and $(u, v') \in M$ and $v$ is unmatched, then break the edge $(u, v')$ and make the edge $(u, v)$.
\item If edge $(u, v) \in M$, then do nothing.
\end{enumerate}
The transitions are defined in such a way that the adjacency relation $B$ between two states (matchings) $M$ and $M'$ is given by
\begin{equation*}
B = \lbrace (M, M'): |M \oplus M'| = \{2, 4\} \rbrace.
\end{equation*} 
Also, the chain should be constructed in such a way that it performs a biased random walk on the state space that satisfies the convergence conditions. For faster convergence we should propagate the chain in such a fashion that it favours more desirable states over less desirable states. The utility function $U$ should be defined in such a way that it captures the essence of the optimal solutions, assigning higher values to the states that are closer to the optimum. A transition from a current state to a proposed state is guaranteed if the utility of the 
proposed state is higher than the current state, else with some probability strictly
less than one, making the process random. This kind of transition definition inserts self loops in the chain which 
makes the chain $\Ma$ so-called aperiodic.

In order to prove this chain is irreducible, we need to show that it is possible to traverse from any state to any other state in the chain. For this consider two states $I$ and $F$ and our aim is to show that there exist a path from $I$ to $F$ as the chain progress. 
Thus given $I$ and $F$ a canonical path from $I$ to $F$ can be found by finding $I \oplus F$, where $\oplus$ denotes the symmetric difference and solving the inconsistencies in it from being a matching. This along with $I \cap F$ ensures that every intermediate state is a matching and the final state is $F$. Thus the Markov chain $\Ma$ is also irreducible making it `ergodic', and there by guaranteeing the existence of a unique stationary 
distribution. More precisely, there exist a stationary distribution $\pi$ for $\Ma$ and for sufficiently large $T$, $\Ma$ converges to the stationary distribution within acceptable accuracy.

The stationary distribution of $\Ma$ is sampled from a Gibb's distribution where the likelihood of a state $i \in \mathcal{N}$ is given by $\pi_{i} = $ $exp\{\beta U(i)\} / C$, where $C = { \sum_{j \in \mathcal{N}} exp\{\beta U(j)\} }$ is the partition function. Let $M_{1}$ and $M_{2}$ be two adjacent states of $\Ma$, the acceptance of a transition from $M_{1}$ to $M_{2}$ is defined based on a utility comparison between the respective states. More precisely, at $k^{\rm th}$ instant when state $X_{k} = M_{1}$ the transition probabilities to state $M_{2}$ is modelled as
\begin{eqnarray*} \label{transit}
\small 
X_{k + 1} =
\begin{cases}
M_{2}~ \mbox{if}~ U(M_{2}) \geqslant U(M_{1}), \\
M_{2}~ \mbox{if}~ U(M_{2}) < U(M_{1})~ \mathrm{with~ prob.}~ exp\{\beta(U(M_2) - U(M_1))\}.
\end{cases}
\end{eqnarray*}
Thus the chain $\Ma$ is a reversible positive recurrent
discrete time Markov chain (DTMC) that has the steady state distribution same as the required Gibb's distribution. 

Given a Markov chain $\Mb$, the primary aim is to check whether $\Mb$ converges and also how fast. Convergence at any large value of simulation step $T$ is suboptimal. The `fast' convergence of the Markov chain to the desired distribution is a very important factor as it decides the running time of the algorithm that uses the chain. Fast here means bounded by a polynomial in the input size. Since the state space itself may
be exponentially large, rapid mixing is a strong property. More precisely, the chain must typically be
close to stationarity after visiting only a very small fraction of the space. Rapidly mixing Markov chain  \cite{BoyDiaXia:04}, \cite{Gur:00} thus emerged  as a subject undergoing intense study in research field. Making the sampling from the required distribution efficient is
equivalent to ensuring that the MCMC Markov chain is rapid mixing. Broadly, the rapid
mixing implies that after running DTMC for steps polynomial in n, observed empirical
distribution is very close to the steady state distribution. In the next section we discuss some tools used for establishing rapid mixing property of chain $\Ma$.}

\section{Rapid Mixing Markov Chain}\label{sec:rapid}
In probability theory, the mixing time of a Markov chain is the minimum time within which the Markov chain is 
``sufficiently close'' to its steady state distribution for all initial conditions. It gives a measure of how large $T$ should be so that the state of the chain $\cMb$ is close enough to the desired stationary distribution. 
Here onwards, we fix $\beta \in [0,\infty)$, and omit it from the notations for brevity. 
For DTMC $\cM$, define how far the $t$ step transition probability from the steady state distribution as 
{\it total variation distance} 
\begin{align} \label{eq:TV}
d(t) = \frac{1}{2}\max_{M^\prime \in \cN}\sum_{M \in \cN} |P_{M^\prime M}^{(t)} - \pi_M|.\end{align} 
The total variation distance $d(t)$ is a monotone decreasing function of $t$.
Thus, if $d(t) \leqslant \epsilon$, then sampling anytime after $t$ steps ensures that the sampled distribution is
at most $\epsilon$ away from the desired distribution $\bm{\pi}$ irrespective of the initial state. 
Thus, the $\epsilon$ mixing time, say $\tau_\epsilon$, is defined as follows for any $\epsilon>0$:
\[\te = \min\{t\ : \ d(t) \leqslant \epsilon  \} .   \]
A Markov chain is said to be {\it rapidly mixing} if the mixing time increases as polynomial in $n$ and
logarithmic in $1/\epsilon$. 
\begin{defn} \label{def:rapid}
	A Markov chain is said to be rapidly mixing if $\te$ is $O(poly(n)+\log(1/\epsilon))$ for every $\epsilon > 0$.
\end{defn}

Our aim is to show that $\cM$ is rapidly mixing. Since $\bm{\pi}$ is a left eigenvector of the transition
probability matrix $P$, from (\ref{eq:TV}) it is can be seen that to establish rapid mixing 
the spectral properties of $P$ have to be analyzed \cite{DiaStr:91}.
Note that the Markov chain associated with the problem is a reversible Markov chain. 
The standard tools that are used for characterizing the rapid mixing property of a reversible chain are conductance, 
canonical paths, coupling and path coupling. For a detailed description of these tools see \cite{Gur:00}. 
In order to prove the rapid mixing property of the chain $\cM$, 
we use conductance of the chain and the concept of canonical paths. A brief description of these are given below.

\subsection{Conductance}
Broadly, conductance of a graph is a measure of its connectedness.
Higher value of conductance typically imply that a large number
of edges will have to be removed for dividing the graph into multiple components. Similarly, the conductance of Markov chain graph
refers to ease with which various states can be traversed in the chain.
Specifically, conductance provides a direct way of bounding the
spectral gap of the transition probability matrix $P$ of $\cM$
using a geometric parameter based on a structural property of the
underlying weighted graph corresponding to $\cM$. Here, the
spectral gap refers to the distance between the unit circle 
and the second largest eigenvalue (by magnitude) of $P$.
Now, we formally define conductance of a Markov chain.
Let $w_{\mp M} = \pi_\mp P_{\mp M} = \pi_M P_{M\mp}$ denote
the weight on edge $(\mp,M)$.  
Note that weights are symmetric on account of time reversibility.
Moreover, the weight is zero if
$P_{\mp M} =0$.
For a subset $S$ of $\cN$, define $\pi_S = \sum_{M \in S} \pi_M$, i.e.
$\pi_S$ is the steady state probability of being in set $S$.
Let $S^c = \cN \setminus S$. Define,
\[ Q(S,S^c) = \sum_{M\in S, \mp\in S^c} \pi_M P_{M\mp,}\]
i.e. $Q(S,S^c)$ denote the flow from set $S$ to $S^c$ at steady state.
The conductance of $\cM$ is defined as
\begin{equation} \label{eq:conductance}
\Phi = \Phi(\Mb) := \displaystyle \min_{\substack{S \subseteq \cN \\ 0 < \pi_{S} \leqslant 1/2}} \frac{ Q(S, S^{c})} {\pi_{S}}.
\end{equation}
Note that conductance captures ability of Markov chain to move out of low probability sets.
The mixing time can be high if a Markov chain gets stuck in such sets and thereby limiting its ability to explore the state space.
The quantity $Q(S,S^c)$ captures the steady state probability of moving from set $S$ to $S^c$.
In the definition, we focus on probabilistically non-dominant sets $S$ that have probability less than
equal to 1/2. For fast mixing, we need the chain not to be stuck for a long time in low probability sets.
Thus, a higher value of conductance imply a shorter mixing time.
 
 \remove{
Rapid mixing of a Markov chain amounts to bounding the magnitude of the second farthest from origin eigenvalue of the probability transition matrix $P$. Conductance is a direct way of bounding the spectral gap of matrix $P$, using a geometric parameter which is based 
on a structural property of the underlying weighted 
graph. The underlying graph of a Markov chain is the graph that has the state space as its vertex set, each pair of state $i, j$  has a weight associated with its edge given by $w_{ij} = \pi_{i}P_{i, j} = \pi_{j}P_{j, i}$. Thus whenever $P_{i, j} > 0$, there exists an edge corresponding to those vertices. This characterisation is related to \cite{Alo:86} and \cite{AloMil:85} on eigenvalues and expander graphs. The conductance of a Markov chain $\Mb$ is defined as 
\begin{equation} 
\Phi = \Phi(\Mb) := \displaystyle \min_{\substack{S \subset \Omega \\ 0 < \pi_{S} \leqslant 1/2}} \frac{ Q(S, S^{c})} {\pi_{S}},
\end{equation}\label{eq:conductance}
where $S^{c} = \Omega \setminus S$, $Q(i, j) = \pi_{i}P_{i, j} = \pi_{j}P_{j, i}$, $\pi_{S}$ is the probability density under the stationary distribution $\pi$ of $\Mb$, and $Q(S, S^{c})$ is the sum of $Q(i, j)$ over all $(i, j) \in S \times S^{c}$. $\Phi$ measures the ability of $\Mb$ to escape from any small region of the state space, and hence to make rapid progress to the stationary distribution.}

\subsection{Canonical Paths}
Concept of canonical paths is used to quantify conductance of the Markov chain graph.
Recall that the conductance is a measure of graph connectedness. To ensure connectedness,
we need to ensure that there do not exist bottlenecks in the graph. 
Thus, the slow mixing is characterized
by bottlenecks, which is a set of edges whose total weight is small and their removal disconnects the state space into two exponentially large sets. Thus, presence of a bottleneck in the chain, results in slow mixing of the chain, as it takes exponential time to move from one side to the other. On the other hand, absence of a bottleneck ensures that no transition of the chain is used by too many paths. Construction of canonical paths aids in quantifying the number of paths passing through an edge.
Canonical path between any initial matching $I$ and final matching $F$ is a specially constructed unique positive probability 
simple path from $I$ to $F$ in the underlying Markov chain graph.  Using these paths, we obtain a lower bound on the graph
conductance. We describe the construction in detail. First, let $A \oplus B$ denote the symmetric difference between sets
$A$ and $B$.

\remove{
Let $H(\Mb)$ be the underlying graph associated with an ergodic Markov chain $\Mb$ on a finite set $\Omega$. Let $e$ be an edge between an ordered pair $(i, j)$ of weight $Q(e) = Q(i, j) = \pi_{i}P_{i, j}$ whenever $P_{i, j} > 0$. A set of canonical paths for $\Mb$ is a set $\Gamma$ of simple paths $\gamma_{ij}$ in the graph $H(\Mb)$, one between each ordered pair $(i, j)$ of distinct vertices. Thus to have a lower bound on the conductance of $\Mb$, we should have a set of canonical paths that do not overload any transition of $\Mb$. The pseudo code for constructing the canonical path is given in Algorithm~\ref{alg:canonicalpath}. }

\begin{algorithm}[t]
  \caption{Pseudo code for constructing canonical path 
  \label{alg:canonicalpath}}
   \begin{algorithmic}
\State \textit {\bf Input:} States $I \in \cN$ and  $F \in \cN$.
\State \textit {\bf Initialize:} $k=1$, $\tm_1 \leftarrow I$ and $J \leftarrow I \oplus F$
\end{algorithmic}
  \begin{algorithmic}[1]
  \For {$k = 1,2,\cdots,m$}
  \If {$y_k \in J$ }
  \State Let $(y_k,z) \in \tm_k \cap J$ and $(y_k,z^\prime) \in J \cap F$
  \If {there exist $(y^\prime,z^\prime)\in \tm_k$}
	  \State $\tm_{k+1} \leftarrow (\tm_k\cup\{(y_k,z^\prime),(y^\prime,z)\})\setminus \{(y_k,z),(y^\prime,z^\prime)\} $ \label{step:a2_5}
  \Else
	  \State $\tm_{k+1} \leftarrow (\tm_k \cup \{(y_k,z^\prime)\}) \setminus \{(y_k,z)\}$ \label{step:a2_7}
  \EndIf
  \Else 
  \State $\tm_{k+1} \leftarrow \tm_k$
  
  \EndIf
  \State $k \leftarrow k+1$
  \State $J \leftarrow \tm_k \oplus F $  
  \EndFor
\end{algorithmic}
\begin{algorithmic}
\State \textit{\bf Output:} Canonical path from $I$ to $F$ as $I=\tm_1,\tm_2,\ldots,\tm_k=F$
  \end{algorithmic}
\end{algorithm}
Fix any two distinct matchings $I$ and $F$ in $\cN$.
The pseudo code for constructing the canonical path from $I$ to $F$ is given in Algorithm \ref{alg:canonicalpath}. Let us say that the nodes in set $V_1$ is indexed as $V_1 = \{y_1, y_2, \cdots, y_m\}$. Given $I$ and $F$ we start with the first node in the set $V_1$, i.e. $y_1$. Note that, for $J = I \oplus F$, $\mbox{deg}_{y_1}(J) = \{0, 2\}$. If $\mbox{deg}_{y_1}(J) = 0$, then the corresponding edge is present in both $I$ and $F$. So, the next state is the current state itself. 
However, if the $\mbox{deg}_{y_1}(J) =  2$ then the corresponding edges are $(y_1,z)$ and $(y_1,z^\prime)$ in $I$ and $F$ respectively. 
We want to connect $y_1$ to $z^\prime$. If $z^\prime$ is not matched in $I$, then we simply break $(y_1,z)$ and
form $(y_1,z^\prime)$ and this gives us matching $\tm_2$ (see Step~\ref{step:a2_7}).
If $z^\prime$ is already matched to some $y^\prime$ in $I$, then we remove edges $(y_1,z)$ and $(y^\prime,z^\prime)$ and add
edges $(y_1,z^\prime)$ and $(y^\prime,z)$ to get $\tm_2$ (see Step~\ref{step:a2_5}).
Note that in $\tm_2$, we have matched all nodes with indices less than or equal to $1$ as in $F$.
We proceed in the similar fashion by replacing $I$ by $\tm_2$ and obtain $\tm_3$ if $\tm_2 \not= F$.
Note that in each step of the algorithm either we move closer to $F$ or we remain in the previous state itself if the node corresponding to that step is already matched to the same vertex in both $I$ and $F$. Thus the smallest index of node with
non-zero degree in $\tm_k \oplus F$ is a monotone increasing function of $k$.
The path  $I=\tm_1,\tm_2,\ldots,\tm_k=F$ is a canonical path from $I$ to $F$. We show the following.


\begin{theorem}
For any $I, F \in \cN$, Algorithm~\ref{alg:canonicalpath} calculates a unique canonical path say $I = M_1 \rightarrow M_2 \rightarrow \cdots \rightarrow M_k = F$ such that $k = m+1$ and $P_{M_{j}M_{j + 1}} > 0$ $\forall~ j = 1, \cdots, k-1$.
\end{theorem}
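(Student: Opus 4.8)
The plan is to verify four claims in sequence: (i) each $\tm_k$ produced by Algorithm~\ref{alg:canonicalpath} lies in $\cN$; (ii) consecutive matchings differ by a legal chain move, so $P_{\tm_k\tm_{k+1}}>0$; (iii) the loop terminates at $F$ after exactly $m$ iterations, giving $k=m+1$; and (iv) the path is unique. Claim (iv) is immediate, since for fixed $I$ and $F$ every branch taken by the algorithm is forced.

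For (i) I would induct on $k$, the base case being $\tm_1=I\in\cN$. Since $y_k$ carries exactly one edge in each of the perfect matchings $\tm_k$ and $F$, we always have $\deg_{y_k}(\tm_k\oplus F)\in\{0,2\}$. When the degree is $0$ the step sets $\tm_{k+1}=\tm_k$ and nothing is to be shown; when it is $2$, Steps~\ref{step:a2_5} and~\ref{step:a2_7} merely re-pair the four vertices $y_k,y',z,z'$ among themselves, and a short endpoint check (using that $z'$ is unmatched in the shift case and matched exactly to $y'$ in the swap case) confirms that $\tm_{k+1}$ is again a perfect matching.

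For (ii), the shift of Step~\ref{step:a2_7} is precisely case~(b) and the swap of Step~\ref{step:a2_5} is precisely case~(c) of the transition rule of Algorithm~\ref{alg:mcmc}: taking the chain's selected pair to be $(y,z)=(y_k,z')$ forces $z_1=z$ and $y_1=y'$, so the proposed matching equals $\tm_{k+1}$. As the uniform selection probability and the acceptance probability $\min\{1,e^{\beta(\cdot)}\}$ are both strictly positive, $P_{\tm_k\tm_{k+1}}>0$. For a do-nothing step $\tm_{k+1}=\tm_k$ the needed positivity is that of the self-loop $P_{\tm_k\tm_k}$, which holds because case~(a) (and rejected proposals) keep the chain in place.

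Claim (iii) is the heart of the matter. I would prove the invariant that in $\tm_k$ every node $y_i$ with $i<k$ carries the same edge as in $F$; equivalently, the least index of a nonzero-degree vertex of $\tm_k\oplus F$ strictly increases with $k$. The main obstacle is showing that the swap in Step~\ref{step:a2_5} never touches an already-fixed node, i.e. that the partner $y'$ of $z'$ in $\tm_k$ does not lie in $\{y_1,\dots,y_{k-1}\}$. This is where the processing order is essential: $z'$ is the $F$-partner of $y_k$, so if $y'$ had index below $k$ the invariant would make $y'$ also $F$-matched to $z'$, forcing two $F$-edges at $z'$, a contradiction. Granting the invariant, after the $m$-th iteration every node of $V_1$ is matched as in $F$; since $\tm_{m+1}$ and $F$ are perfect matchings agreeing on all of $V_1$, they coincide, and the path $\tm_1,\dots,\tm_{m+1}$ has length $k=m+1$.
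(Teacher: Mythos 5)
Your proposal is correct and follows essentially the same route as the paper's proof: an induction establishing that each intermediate state is a perfect matching, the invariant that already-processed nodes of $V_1$ remain matched as in $F$ (the paper phrases this as $\Psi(k)\subseteq\Psi(k+1)$ for $\Psi(k)=\{y : (y,z)\in \tm_k\cap F\}$), the key observation that the swap partner $y'$ cannot be an already-fixed node, and uniqueness plus positivity of transitions from the forced branching and the fact that at most two edges change per step. If anything, your argument for why $y'$ is not already fixed (two $F$-edges at $z'$ would result) is spelled out more explicitly than in the paper, which only asserts $y\notin\Psi(k)$.
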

\remove{
\begin{proof}
Given any $I$ and $F$ the canonical path from $I$ to $F$ is constructed using Algorithm \ref{alg:canonicalpath}. 
First note that every intermediate state on the canonical path is a perfect matching. Now, we show that
Algorithm~\ref{alg:canonicalpath} converges in $m$ steps, i.e. the length of the canonical path is $m$.
To see this consider any $\tm_k$ and let $J = \tm_k \oplus F$. Also, let
$y(k) = \min\{y\in V_1 : (y,z) \in J \mbox{ for some $z$}\}$. We will show that $y(k) \leqslant y(k+1)$. By construction, every $y < y(k)$ is matched to the same node
in both $\tm_k$ and $F$. Let $S(k) \subseteq V_2$ denote the set of
vertices that are matched to some $y < y(k)$ in $F$.
Also, $y(k)$ is matched to different nodes in $\tm_k$ and $F$.
Without loss of generality, let $(y(k),z_1) \in \tm_k$ and
$(y(k),z_2) \in F$. Note that $z_1$ and $z_2$ do not belong to $S(k)$.
Now, suppose $z_2$ is not matched in $\tm_k$. Then $\tm_{k+1}$
is obtained by removing $(y(k),z_1)$ and adding $(y(k),z_2)$.
Note that now all nodes $y \leqslant y(k)$ are matched in the same way
as that in $F$ under $\tm_{k+1}$. Thus the required follows.
Alternatively, if $z_2$ is matched to some node $y$ in $\tm_k$,
then (1)~$y > y(k)$, and (2)~we create $\tm_{k+1}$ by removing
edges $(y(k),z_1)$, $(y,z_2)$ and adding $(y(k),z_2)$, $(y,z_1)$
to $\tm_k$.
In this case also $y(k+1) > y(k)$.This shows that the length of
the canonical path is $m$.  

Uniqueness of the canonical path is clear from the construction
as we proceed in a specific order. Also, each transition has non-zero 
probability as we change at most two edges in every step. See
Algorithm~\ref{alg:mcmc} for further clarification.
\end{proof} }
\begin{proof}
Given any $I$ and $F$ the canonical path from $I$ to $F$ is constructed using Algorithm \ref{alg:canonicalpath}. 
First note that every intermediate state on the canonical path is a perfect matching. Now, we show that
Algorithm~\ref{alg:canonicalpath} converges in $m$ steps, i.e. the length of the canonical path is $m$.
To see this consider any instant $k$ which corresponds to node $y_k \in V_1$ which is to be resolved in $\tm_k$. Also, let
$\Psi(k) = \{ y \in V_1 : (y,z) \in  \tm_{k} \cap F \}$. We will show that $\Psi(k) \subseteq \Psi(k+1)$. By construction, every $y \in \Psi(k)$ is matched to the same node
in both $\tm_k$ and $F$. Node $y_k$ will fall in one of the cases, (i) $y_k \in \Psi(k)$ or (ii) $y_k \notin \Psi(k)$. In case (i), $\tm_{k+1} = \tm_{k}$ and $\Psi(k+1) = \Psi(k)$. In case (ii), i.e.  $y_k \notin \Psi(k)$, then let $S(k) \subseteq V_2$ denote the set of
vertices that are matched to some $y \in \Psi(k)$ in $F$.
Also, $y_k$ is matched to different nodes in $\tm_k$ and $F$.
Without loss of generality, let $(y_k,z_1) \in \tm_k$ and
$(y_k,z_2) \in F$. Note that $z_1$ and $z_2$ do not belong to $S(k)$.
Now, suppose $z_2$ is not matched in $\tm_k$. Then $\tm_{k+1}$
is obtained by removing $(y_k,z_1)$ and adding $(y_k,z_2)$.
Note that now all nodes $y \leqslant y_k$ are matched in the same way
as that in $F$ under $\tm_{k+1}$. Thus the required follows.
Alternatively, if $z_2$ is matched to some node $y$ in $\tm_k$,
then (1)~$y \notin \Psi(k)$, and (2)~we create $\tm_{k+1}$ by removing
edges $(y_k,z_1)$, $(y,z_2)$ and adding $(y_k,z_2)$, $(y,z_1)$
to $\tm_k$.
In this case also $\Psi(k+1) \supseteq \Psi(k)$.This shows that the length of
the canonical path is $m$.  

Uniqueness of the canonical path is clear from the construction
as we proceed in a specific order. Also, each transition has non-zero 
probability as we change at most two edges in every step. See
Algorithm~\ref{alg:mcmc} for further clarification.
\end{proof}

Construction of the canonical path $I = M_1 \rightarrow M_2 \rightarrow \cdots \rightarrow M_k = F$ 
for a specific example is demonstrated through Figure \ref{fig:canonicalpath} in the Appendix. 

Consider an arbitrary transition $e$ of $\Ma$ that changes the states from $M = (M_{i})$ to $M' = (M'_{i})$. For any transition $e$, let $\cC_{e}$ be the set of ordered pairs $<I, F>$ of perfect matchings, such that $e$ is contained in the canonical path from $I$ to $F$, i.e. $\cC_e = \{ (I, F): e \in I = M_1 \rightarrow M_2 \rightarrow \cdots \rightarrow M_k = F\}$. In order to prove the non-existence of bottlenecks, we construct a map from $\cC_{e}$ to the state space $\cN$. 

\begin{lemma} \label{lem:two}
For every transition $e$, there exists a map $\sigma_{e}:\cN \times \cN \rightarrow \mathcal{N}$ such that $|\cC_{e}| < |\cN|$.
\end{lemma}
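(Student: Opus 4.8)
The plan is to use the standard encoding (injective-map) technique for bounding the congestion along canonical paths. Fix a transition $e$ carrying the chain from $M$ to $\mp$. For every pair $(I,F)\in\cC_{e}$ I would define the image $\sig := \IFM$, i.e. the symmetric difference of $I\oplus F$ with the tail state $M$ of the transition. The intuition is that $M$ is the partially unwound state in which every node of $V_1$ with small index is already matched as in $F$ while the remaining nodes are still matched as in $I$; the object $\IFM$ is then the \emph{complementary} unwinding, agreeing with $I$ on the already-processed nodes and with $F$ on the not-yet-processed ones. The first task is to verify that $\sig$ is again a perfect matching, so that $\sigma_{e}$ indeed maps into $\cN$; here I would lean on the monotonicity established in the previous theorem (the set $\Psi(k)$ of correctly matched nodes only grows), which guarantees that the $I$-edges on the processed part and the $F$-edges on the unprocessed part use disjoint sets of $V_2$-vertices, except possibly at the single active node $y_k$, which must be treated separately.

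Next I would prove that $\sigma_{e}$ is injective on $\cC_{e}$. The key observation is that the transition $e$ itself reveals the active node (read off from the one or two edges in $M\oplus\mp$) and hence the stage of the unwinding. Given the image $\sig$ together with $M$, one recovers $I\oplus F = \sig \oplus M$, and then separates $I$ from $F$ using the stage information: on the processed nodes $M$ carries the $F$-edges and $\sig$ the $I$-edges, on the unprocessed nodes the roles are reversed, and the lone active node is disambiguated by the edges changed in $e$. This reconstruction shows that distinct pairs in $\cC_{e}$ yield distinct images, so $\sigma_{e}$ restricted to $\cC_{e}$ is injective and therefore $|\cC_{e}|\leqslant|\cN|$.

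Finally, to upgrade this to the strict inequality I would exhibit a state that is never an image. Since $\sig = M$ forces $I\oplus F = \emptyset$, that is $I=F$, and a diagonal pair $(I,I)$ traverses no transition and hence does not belong to $\cC_{e}$, the matching $M$ lies outside the image of $\sigma_{e}$ restricted to $\cC_{e}$. Consequently $|\cC_{e}|\leqslant|\cN|-1<|\cN|$, which is exactly the claim.

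I expect the main obstacle to be the first step: confirming that $\IFM$ is a genuine perfect matching in $\cN$ and correctly handling the boundary at the active node. This is precisely the point where analogous matching arguments can produce a near-perfect rather than a perfect matching, so care is needed to check that no two $V_1$-vertices collide on a common $V_2$-vertex when the $I$- and $F$-edges are spliced together, and that the data carried by $e$ is sufficient to reconstruct the active edge during the inversion. The slack afforded by $|V_2|=n\geqslant m$ should help, but the seam at the active node is where the verification genuinely has to be done.
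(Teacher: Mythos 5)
Your overall strategy (the Jerrum--Sinclair encoding/injection technique) is a legitimate and in fact more standard route to this kind of bound, and it is genuinely different from what the paper does --- the paper simply counts the pairs $\langle I,F\rangle$ whose canonical path uses $e$, obtaining $|\cC_{e}|=O(n^{m-1})$ against $|\cN|=n!/(n-m)!=O(n^{m})$. However, your proof has a fatal gap at exactly the point you flagged. The structural picture underlying your encoding --- that the tail state $M$ of the transition agrees with $F$ on the already-processed vertices of $V_1$ and with $I$ on the not-yet-processed ones --- is false for the canonical paths of Algorithm~\ref{alg:canonicalpath}. The swap in Step~\ref{step:a2_5} re-partners the displaced vertex $y'$ (an \emph{unprocessed} vertex) with $y_k$'s old partner $z$, so after one such swap $y'$ is matched to a vertex that is in general neither its $I$-partner nor its $F$-partner, and these perturbations chain through the unprocessed suffix. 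Concretely, take $m=3$, $I=\{(1,a),(2,b),(3,c)\}$, $F=\{(1,b),(2,c),(3,a)\}$. The canonical path passes through $M=\tm_2=\{(1,b),(2,a),(3,c)\}$, and $(I\oplus F)\oplus M=\{(1,a),(2,a),(2,b),(2,c),(3,a)\}$, in which vertex $2$ has degree three and vertices $1$ and $3$ collide on $a$. So your $\sigma_{e}(I,F)$ is not even a matching, let alone an element of $\cN$, and the decoding step (``on the processed nodes $M$ carries the $F$-edges and the image carries the $I$-edges, on the unprocessed nodes the roles are reversed'') has nothing to act on.

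The classical version of your argument works because there the canonical path unwinds $I\oplus F$ cycle by cycle, moving \emph{along} each alternating cycle, so intermediate states really do agree with $I$ outside the processed portion; and even then the image is only a near-perfect matching, which is why the Jerrum--Sinclair state space is enlarged to include near-perfect matchings. Here the path processes $V_1$ in index order, cutting across the cycles of $I\oplus F$, and the state space contains only perfect matchings, so both hypotheses of the classical encoding fail. To repair your argument you would need either to redesign the canonical paths to follow the cycles of $I\oplus F$, or to replace $(I\oplus F)\oplus M$ by an encoding adapted to this path (for instance, one recording how the unprocessed part of $M$ differs from $I$), and in either case re-verify well-definedness and injectivity. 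Your closing observation that $M$ itself cannot be an image (to force strict inequality) is nice but moot until the map lands in $\cN$.
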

\begin{proof}
Given any two states $I$ and $F$, the canonical path construction is defined in such a way that in exactly $m$ transitions we will reach $F$ starting from $I$. Let us say that $e = M \rightarrow M^{\prime}$ is the transition corresponding to the $(k+1)^{\rm th}$ node of set $V_1$. Thus the number of states, say $I^{\prime}$ from which $M$ can be reached by one transition are $(n -(k-1))$, since $(k -1)$ nodes are already matched and their matched pairs are untouched in the $k^{\rm th}$ step. Similarly, the number of ways in which $M$ can be reached by two state transitions are $(n-(k-2))$. Finally, the number of ways in which $M$ can be reached in $k$ steps is $(n - 1)$ ways. Thus the total possible ways in which $I \rightarrow M$ can be reached is $(n -(k-1)) \times (n-(k-2)) \times \cdots \times (n - 1)$.

Now, let us consider the cases where we can reach $F$ from state $M^{\prime}$. The one step cases possible are $(n-(k+1))$, two step possible cases are $(n-(k+2))$ and so on. Finally, the number of possible ways to reach $F$ in $(m - (k+1))$ ways is $(n-m)$. Thus the total number of ways in which $M^\prime \rightarrow F$ can be reached is less than $(n-(k+1)) \times (n-(k+2)) \times \cdots \times (n-m)$.

Thus, the number of $<I, F>$ pairs that have a canonical path from $I$ to $F$ through $e = (M, M^\prime)$
\begin{eqnarray*}
|\mathcal{C}_{e}|  &=& (n - 1)  \cdots  (n-k+2)  (n-k+1)\nonumber \\
&\times & (n-k-1) (n-k-2) \cdots  (n-m) \nonumber \\
|\mathcal{N}| & = & \dfrac{n!}{(n-m)!}  \nonumber \\
&=& n (n-1) \cdots (n-m+2) (n-m+1) \nonumber \\
|\mathcal{C}_{e}| &=& O(n^{m-1})~ {\rm{and}} ~ |\mathcal{N}| = O(n^{m}) \nonumber \\
|\mathcal{C}_{e}| &<& |\mathcal{N}|.
\end{eqnarray*}
 
 Thus given $I, F$ and $e$, let $\sig \in \mathcal{N}$ be a perfect matching. Then, $\sigma_{e}: \mathcal{C}(e) \rightarrow \mathcal{N}$ has $|\mathcal{C}(e)| < |\mathcal{N}|$. This completes the proof of Lemma \ref{lem:two}.
\end{proof}

  The main result of this paper is proving the rapid mixing property of the chain $\Ma$ by finding a lower bound on the conductance of the chain. The concept of canonical paths is used in this. In the following section we prove the rapid mixing property of the Markov chain $\Ma$.

\section{Main Result}\label{sec:mainresult}
In order to prove rapid mixing of $\Ma$ we find a bound on the conductance of the underlying graph associated with the chain. Later, using the conductance bound, we find a bound on the mixing time of the chain $\Ma$. The following results Theorem \ref{th:con} and Theorem \ref{th:time} demonstrate these.
\begin{theorem} \label{th:con}
Consider a bipartite graph $G(V, E)$ with vertex set $V = V_{1} \cup V_{2}$, such that $|V_{1}| = m$,$|V_{2}| = n$ and $m \leqslant n$. The conductance of the Markov chain $\Ma$ that has state space $\mathcal{N}$ of all possible perfect matchings of size $m$, is bounded below by $1/(4\alpha^{3} mn)$, where $\alpha = exp\{\beta(U_{\mathrm{max}} - U_{\mathrm{min}})\}$.
\end{theorem}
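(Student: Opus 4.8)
The plan is to bound the conductance from below by the canonical-paths (Sinclair) method \cite{Gur:00}. For each ordered pair $(I,F)$ I would route $\pi_I\pi_F$ units of flow along the unique path $\gamma_{IF}$ produced by Algorithm~\ref{alg:canonicalpath}, and define the congestion
\[ \bar\rho = \max_{e}\frac{1}{Q(e)}\sum_{(I,F)\in\cC_e}\pi_I\pi_F . \]
The standard flow inequality then gives $\Phi\geqslant 1/(2\bar\rho)$: any cut $(S,S^c)$ with $0<\pi_S\leqslant 1/2$ must carry all $\sum_{I\in S,F\in S^c}\pi_I\pi_F=\pi_S\pi_{S^c}\geqslant \pi_S/2$ units of flow across its edges, while edge $e$ carries at most $\bar\rho\,Q(e)$, so $Q(S,S^c)\geqslant \pi_S/(2\bar\rho)$. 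Hence the theorem reduces to proving $\bar\rho\leqslant 2\alpha^3 mn$.

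First I would lower bound $Q(e)$ for a transition $e=(M,\mp)$. Since $Q(e)=\pi_M P_{M\mp}$, and in Algorithm~\ref{alg:mcmc} the swap realizing $e$ is proposed by selecting a specific pair $(y,z)$ with probability $1/(mn)$ and then accepted with probability $\min\{1,e^{\beta(U(\mp)-U(M))}\}\geqslant 1/\alpha$, we obtain $P_{M\mp}\geqslant 1/(mn\alpha)$ and therefore $Q(e)\geqslant \pi_M/(mn\alpha)$.

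Second, I would bound the numerator using the encoding $\sig=\sigma_e(I,F)\in\cN$ supplied by Lemma~\ref{lem:two}. Writing the stationary law in its Gibbs form $\pi_X=e^{\beta U(X)}/\sum_{W}e^{\beta U(W)}$, every term satisfies
\[ \pi_I\pi_F=e^{\beta(U(I)+U(F)-U(M)-U(\sig))}\,\pi_M\,\pi_{\sig}\leqslant \alpha^2\,\pi_M\,\pi_{\sig}, \]
because each of the four utilities lies in $[U_{\mathrm{min}},U_{\mathrm{max}}]$, so the exponent is at most $2\beta(U_{\mathrm{max}}-U_{\mathrm{min}})$. Provided $\sigma_e$ is injective on $\cC_e$, summing and using $\sum_{(I,F)\in\cC_e}\pi_{\sig}\leqslant\sum_{W\in\cN}\pi_W=1$ yields $\sum_{(I,F)\in\cC_e}\pi_I\pi_F\leqslant \alpha^2\pi_M$. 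Combining the two estimates gives $\bar\rho\leqslant \alpha^2\pi_M\cdot(mn\alpha/\pi_M)=\alpha^3 mn$, and tracking the factor $2$ lost where a swap is realizable by two ordered choices of $(y,z)$, or where $\sigma_e$ is at most two-to-one, produces the stated bound $\Phi\geqslant 1/(4\alpha^3 mn)$.

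The main obstacle is the second step: I must exhibit $\sigma_e(I,F)$ as a genuine perfect matching in $\cN$ and show that $(I,F)$ can be recovered from it together with the endpoints $M,\mp$ of $e$, i.e. that $\sigma_e$ is injective for each fixed $e$. The natural candidate is $I\oplus F\oplus(M\cup\mp)$, but this symmetric difference need not itself be a perfect matching and must be repaired---exactly as in the Jerrum--Sinclair canonical-paths argument for the permanent---while keeping the repair reversible so that injectivity, and hence $\sum_{(I,F)\in\cC_e}\pi_{\sig}\leqslant 1$, survives. Lemma~\ref{lem:two} already provides the cardinality estimate $|\cC_e|<|\cN|$; upgrading it to the weighted, measure-preserving injection is the delicate part on which the entire conductance bound rests.
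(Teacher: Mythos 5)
Your proposal is essentially the paper's own argument: the same cut--flow inequality $\Phi \geqslant 1/(2b)$, the same pair of Gibbs-ratio bounds $\pi_I \leqslant \alpha\,\pi_M$ and $\pi_F \leqslant \alpha\,\pi_{\sigma_e(I,F)}$, the same lower bound $P_{MM'} \geqslant 1/(2mn\alpha)$ on the transition probability (you defer the factor of $2$, which is exactly where the paper's $4$ comes from), and the same encoding map $\sigma_e$ used to collapse the sum over $\cC_e$. The one step you single out as the delicate part --- exhibiting $\sigma_e$ as an injection into $\cN$ so that $\sum_{(I,F)\in\cC_e}\pi_{\sigma_e(I,F)}\leqslant 1$ --- is precisely the step the paper also leaves unestablished: its Lemma~\ref{lem:two} proves only the cardinality bound $|\cC_e|<|\cN|$, which does not by itself justify that inequality, so your diagnosis of where the argument is fragile applies equally to the published proof.
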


The Theorem below gives an upper bound for the mixing time of the Markov chain.
 \begin{theorem} \label{th:time}
The mixing time of the Markov chain $\Ma$ that has state space $\mathcal{N}$ of all possible perfect matchings of size $m$ is bounded by $ \tau_\epsilon \leqslant 32~m^{2}n^{2}\alpha^{6} (-c + m~ {\rm ln}~ n + \mathrm{ln}~ \epsilon^{-1}) $, where $\alpha = exp\{\beta(U_{\mathrm{max}} - U_{\mathrm{min}})\}$ and $c = \mbox{ln}~\alpha$.
 \end{theorem}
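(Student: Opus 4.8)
The plan is to convert the geometric lower bound on the conductance obtained in Theorem~\ref{th:con} into a lower bound on the spectral gap of $P$, and then to turn that spectral gap into the claimed bound on $\te$ using the explicit Gibbs form of the stationary distribution from Lemma~\ref{lem:reversible}. By Lemmas~\ref{lem:DTMC} and~\ref{lem:reversible} the chain $\cM$ is finite, irreducible, aperiodic and reversible, so $P$ is self-adjoint in $\ell^2(\pi)$ and has real eigenvalues $1=\lambda_0>\lambda_1\geqslant\cdots\geqslant\lambda_{|\cN|-1}>-1$. The decay of the total-variation distance $d(t)$ defined in~(\ref{eq:TV}) is controlled by the second-largest eigenvalue in magnitude, $\lambda_\star=\max\{\lambda_1,|\lambda_{|\cN|-1}|\}$, so the whole argument reduces to bounding $1-\lambda_\star$ from below and $\pi_{\min}=\min_{M\in\cN}\pi_M$ from below.

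First I would invoke the discrete Cheeger inequality for reversible chains, $1-\lambda_1\geqslant \Phi^2/2$, which bounds the spectral gap below by the conductance. Substituting the bound $\Phi\geqslant 1/(4\alpha^3 mn)$ from Theorem~\ref{th:con} gives a relaxation-time estimate $2/\Phi^2\leqslant 32\,\alpha^6 m^2 n^2$, which is exactly the prefactor appearing in the statement. Second, I would use the standard eigenvalue bound on convergence for reversible chains, in the form $\te\leqslant \frac{1}{1-\lambda_\star}(\ln \pi_{\min}^{-1}+\ln \epsilon^{-1})$, and combine it with $1-\lambda_\star\geqslant \Phi^2/2$ to obtain $\te\leqslant \frac{2}{\Phi^2}(\ln \pi_{\min}^{-1}+\ln \epsilon^{-1})$.

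To estimate $\pi_{\min}$ I would appeal directly to Lemma~\ref{lem:reversible}: since $\pi_M=e^{\beta U(M)}/\sum_{\mp\in\cN}e^{\beta U(\mp)}$ and $\beta>0$, the minimum is attained at a matching of minimal utility, so $\pi_{\min}\geqslant e^{\beta U_{\mathrm{min}}}/(|\cN|\,e^{\beta U_{\mathrm{max}}})=1/(|\cN|\,\alpha)$. Using $|\cN|=n!/(n-m)!\leqslant n^{m}$ this gives $\ln \pi_{\min}^{-1}\leqslant m\ln n+\ln\alpha=m\ln n+c$, which supplies precisely the additive terms $m\ln n$ and $c$ (and $\ln \epsilon^{-1}$) inside the bracket of the statement. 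Substituting the relaxation bound together with this estimate then yields the claimed inequality $\te\leqslant 32\,m^2 n^2\alpha^6\,(m\ln n+c+\ln \epsilon^{-1})$, matching the theorem up to the precise sign in front of the $c$-term.

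The main obstacle I anticipate is controlling the most negative eigenvalue. The one-sided Cheeger inequality only bounds $1-\lambda_1$, whereas $d(t)$ is governed by $\lambda_\star=\max\{\lambda_1,|\lambda_{|\cN|-1}|\}$, so I must rule out $\lambda_{|\cN|-1}$ approaching $-1$. This is where the structure of Algorithm~\ref{alg:mcmc} is essential: case~(a), together with the Metropolis rejection in the acceptance step, places a uniformly positive holding probability on every state, making the chain ``lazy enough'' that $\lambda_{|\cN|-1}$ is bounded away from $-1$ by a constant that does not degrade with $m$ or $n$; alternatively one passes to the lazy chain $(I+P)/2$, whose eigenvalues are nonnegative and which only changes the absolute constants. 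Once $1-\lambda_\star=1-\lambda_1$ is secured, the remaining steps are the routine substitutions above, and the only care required is in tracking the numerical constants so that the factor $32$ and the exponent $\alpha^6$ come out exactly.
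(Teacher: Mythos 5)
Your proposal is correct and follows essentially the same route as the paper: the paper invokes Proposition~\ref{prop:main} (Jerrum--Sinclair) to get $\te \leqslant 2\Phi^{-2}(\ln \pi_{min}^{-1}+\ln \epsilon^{-1})$, substitutes the conductance bound of Theorem~\ref{th:con}, and lower-bounds $\pi_{min}$ via the Gibbs form together with $|\cN|\leqslant n^{m}$ exactly as you do, with your Cheeger-plus-laziness discussion merely unpacking the content of that proposition (whose hypothesis $\min_i P_{i,i}\geqslant 1/2$ is what rules out the negative eigenvalue you worry about). Note also that the $+c$ you obtain inside the bracket is what the paper's own proof actually derives; the $-c$ in the theorem statement is a sign typo.
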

 
 The bound on the mixing time is a function of $(U_{\mathrm{max}} - U_{\mathrm{min}})$. By Definition \ref{def:rapid}, the chain $\Ma$ is rapid mixing if the mixing time increases as polynomial in the input size and logarithmic in $(1/\epsilon)$. Thus, for $\Ma$ to be rapid mixing, $(U_{\mathrm{max}} - U_{\mathrm{min}})$ should increase logarithmically with input size or it should be a constant. The instances where $(U_{\mathrm{max}} - U_{\mathrm{min}})$ increase logarithmically with the input size are the feasibility check problems like, job scheduling, graph colouring, multiple knapsack and so on.

Jerrum and Sinclair introduced conditions under which a Markov chain $\Mb$ is rapidly mixing through the following results \cite{JerSin:89}. 
\begin{prop}\label{prop:main}
Let H be an underlying graph of a time-reversible ergodic Markov chain $\Mb$ in which $\displaystyle \min_{i}~P_{i, i} \geqslant \frac{1}{2}$, and let $\pi_{min} = \displaystyle \min_{i}~\pi_{i}$ be the minimum stationary state probability. Then 
\begin{itemize}
\item The total variation distance $\Delta(t)$ of $\Mb$ is bounded by
$
\Delta(t) \leqslant (1 - \Phi(H)^{2}/2)^{t}/\pi_{min}.
$
\item The mixing time of $\Mb$ satisfies $\tau_\epsilon \leqslant 2\Phi^{-2}(\mathrm{ln}~ \pi_{min}^{-1} + \mathrm{ln}~ \epsilon^{-1})$.
\end{itemize}
\end{prop}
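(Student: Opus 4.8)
The plan is to reduce the statement to a spectral estimate together with a Cheeger-type inequality, since both are classical consequences of reversibility. Because $\Mb$ is time-reversible, the transition matrix $P$ is self-adjoint with respect to the inner product $\langle f,g\rangle_\pi = \sum_i \pi_i f_i g_i$, so its eigenvalues are real and may be ordered as $1 = \lambda_0 > \lambda_1 \geq \cdots \geq \lambda_{N-1} \geq -1$, where $N = |\cN|$. The ``laziness'' hypothesis $\min_i P_{i,i} \geq \tfrac12$ is used first: writing $P = \tfrac12 I + \tfrac12 Q$ with $Q = 2P - I$, one checks that $Q$ is again stochastic, its off-diagonal entries being non-negative automatically and its diagonal entries non-negative precisely because $P_{ii}\geq \tfrac12$. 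Hence every eigenvalue of $P$ lies in $[0,1]$, the second-largest eigenvalue in magnitude equals $\lambda_1$, and the convergence rate is governed by $\lambda_1$ alone, with no contribution from a possibly large negative eigenvalue that would signal near-periodicity.

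Next I would convert the spectral gap into a bound on $\Delta(t)$. Expanding the $t$-step kernel in the $\langle\cdot,\cdot\rangle_\pi$-orthonormal eigenbasis and using reversibility yields the standard pointwise estimate $|P^{(t)}_{ij} - \pi_j| \leq \sqrt{\pi_j/\pi_i}\,\lambda_1^{\,t}$, obtained by Cauchy--Schwarz after isolating the stationary ($k=0$) term. Summing over $j$, together with $\sum_j \sqrt{\pi_j} = \sum_j \pi_j/\sqrt{\pi_j} \leq \pi_{min}^{-1/2}$ and $\pi_i \geq \pi_{min}$, gives
\[
\Delta(t) = \tfrac12\max_i\sum_j |P^{(t)}_{ij}-\pi_j| \;\leq\; \frac{\lambda_1^{\,t}}{\pi_{min}}.
\]
Thus the first bullet follows once we establish $\lambda_1 \leq 1 - \Phi^2/2$.

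That inequality is the genuine obstacle and the heart of the argument. I would obtain it from the variational form of the spectral gap, $1 - \lambda_1 = \min\{\mathcal{E}(\phi,\phi)/\mathrm{Var}_\pi(\phi)\}$, where $\mathcal{E}(\phi,\phi) = \tfrac12\sum_{i,j}\pi_i P_{ij}(\phi_i-\phi_j)^2$ is the Dirichlet form and the minimum runs over non-constant $\phi$. Taking $\phi$ to be the eigenfunction for $\lambda_1$ and ordering the states so that $\phi$ is monotone, one relates $\mathcal{E}(\phi,\phi)$ to $\Phi$ by a level-set (co-area) decomposition: the flow $Q(S,S^c)$ across each threshold set $S$ is at least $\Phi\,\pi_S$ by the definition (\ref{eq:conductance}) of conductance, and summing these contributions with a Cauchy--Schwarz step produces the quadratic factor $\Phi^2/2$. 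Combined with the positivity $\lambda_1 \geq 0$ established above, this gives $\lambda_1 \leq 1 - \Phi^2/2$, which plugged into the displayed bound proves the first bullet.

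Finally, the mixing-time bound is a short computation. Using $1 - \Phi^2/2 \leq e^{-\Phi^2/2}$ in the first bullet gives $\Delta(t) \leq \pi_{min}^{-1} e^{-t\Phi^2/2}$, and requiring the right-hand side to be at most $\epsilon$ amounts to $t\,\Phi^2/2 \geq \ln\pi_{min}^{-1} + \ln\epsilon^{-1}$. Solving for $t$ yields $\tau_\epsilon \leq 2\Phi^{-2}(\ln\pi_{min}^{-1} + \ln\epsilon^{-1})$, the second bullet. I expect the level-set argument for the Cheeger inequality to be the only delicate step; the spectral estimate and the concluding exponential-to-logarithm inversion are routine.
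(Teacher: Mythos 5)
The paper does not prove Proposition~\ref{prop:main} at all: it is imported verbatim from Jerrum and Sinclair \cite{JerSin:89} and used as a black box to convert the conductance bound of Theorem~\ref{th:con} into the mixing-time bound of Theorem~\ref{th:time}. So there is no in-paper argument to compare yours against; what you have written is, in effect, a reconstruction of the proof living in the cited reference. Your reconstruction is sound and follows the standard route: the laziness decomposition $P=\tfrac12 I+\tfrac12 Q$ correctly forces the spectrum into $[0,1]$ so that only $\lambda_1$ matters; the pointwise estimate $|P^{(t)}_{ij}-\pi_j|\leqslant\sqrt{\pi_j/\pi_i}\,\lambda_1^{t}$ together with $\sum_j\sqrt{\pi_j}\leqslant\pi_{min}^{-1/2}$ gives the first bullet (in fact with an extra factor $\tfrac12$ to spare); and the inversion $1-\Phi^2/2\leqslant e^{-\Phi^2/2}$ gives the second bullet exactly. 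The only step you leave genuinely unproved is the Cheeger-type inequality $1-\lambda_1\geqslant\Phi^2/2$; your level-set sketch is the right mechanism, but as written it is an outline rather than a proof (one must shift and truncate the eigenfunction so that its support has stationary measure at most $\tfrac12$ before the co-area/Cauchy--Schwarz step, otherwise the definition of $\Phi$ over sets with $\pi_S\leqslant 1/2$ cannot be invoked). Since you flag this yourself as the delicate step and the constant $\Phi^2/2$ matches the Jerrum--Sinclair normalization of conductance used in equation~(\ref{eq:conductance}), I would count the proposal as correct modulo that standard, citable lemma.
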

The assumption that $\min_{i}~P_{i, i} \geqslant \frac{1}{2}$ ensures that the chain has no negative eigenvalue. Negative eigenvalues corresponds to oscillatory, or ``near periodic'' behaviour and incorporating sufficiently large self loops forbids the occurrence of negative eigenvalues. Proposition \ref{prop:main} allows us to investigate the rapid mixing property of a Markov chain using its underlying graph. Here, rapid mixing is guaranteed if the conductance of the underlying graph is not too small. We prove the following lemma for proving Theorem \ref{th:con}. 


\remove {
In order to prove the rapid mixing property of the chain $\Ma$, we use the concept of canonical path. Given two states $I$ and $F$, we need to construct the canonical path $\gamma_{IF}$. The path is constructed in such a way that keeping $I \cap F$ untouched, we remove edges from $I \oplus F$ so as to avoid inconsistencies of being a matching. Thus every intermediate state is a matching and the final state is $F$. Consider an arbitrary transition $e$ of $\Ma$ that changes the states from $M = (M_{i})$ to $M' = (M'_{i})$. For any transition $t$, let $\mathcal{P}(t)$ be the set of ordered pairs $<I, F>$ of perfect matchings, such that $t$ is contained in the canonical path from $I$ to $F$. The following Lemma  is used for deriving a lower bound on the conductance of the chain in Theorem \ref{th:con}. }
 
\begin{lemma} \label{lem:one}
Consider a bipartite graph $G(V, E)$ with vertex set $V = V_{1} \cup V_{2}$, such that $|V_{1}| = m$,$|V_{2}| = n$ and $m \leqslant n$. Let $I$ and $F$ denote any two states of the Markov chain $\Ma$ whose state space $\mathcal{N}$ is all possible perfect matchings of size $m$ associated with graph $G$. For any $I, F \in \cC(e)$,
\begin{equation*}
\pi_{I}\pi_{F} \leqslant 2mn~a_{\mathrm{max}}^{3}~\pi_{\sig}~w_{e},
\end{equation*}
where  $a_{\mathrm{max}} = exp\{\beta(U_{\mathrm{max}} - U_{\mathrm{min}})\}$.
\end{lemma}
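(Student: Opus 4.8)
The plan is to prove the inequality by writing every quantity in closed form via the Gibbs distribution of Lemma~\ref{lem:reversible} together with the explicit Metropolis transition rule of Algorithm~\ref{alg:mcmc}, and then bounding the resulting ratio crudely, using only that every utility lies in $[U_{\mathrm{min}},U_{\mathrm{max}}]$. First I would fix the transition $e=(M,M')$ and record the two ingredients. Writing $Z=\sum_{M''\in\cN}\exp\{\beta U(M'')\}$ for the partition function, Lemma~\ref{lem:reversible} gives $\pi_I=e^{\beta U(I)}/Z$, $\pi_F=e^{\beta U(F)}/Z$ and $\pi_{\sig}=e^{\beta U(\sig)}/Z$. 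For the edge weight, the algorithm selects the pair $(y,z)$ uniformly (probability $1/(mn)$) and accepts the proposed matching with the Metropolis probability $\min\{1,e^{\beta(U(M')-U(M))}\}$, so that $P_{MM'}\geq \frac{1}{mn}\min\{1,e^{\beta(U(M')-U(M))}\}$ and hence
\[ w_e=\pi_M P_{MM'}\geq \frac{1}{mnZ}\min\{e^{\beta U(M)},e^{\beta U(M')}\}. \]
The inequality here only discards a harmless integer factor counting how many selected pairs realise the same swap, which can only help since $w_e$ appears on the larger side of the claim.

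Second, I would form the ratio $\pi_I\pi_F/(\pi_{\sig}w_e)$. The partition functions cancel up to the single surviving $mn$, leaving
\[ \frac{\pi_I\pi_F}{\pi_{\sig}\,w_e}\;\leq\; mn\cdot\frac{e^{\beta(U(I)+U(F))}}{e^{\beta U(\sig)}\,\min\{e^{\beta U(M)},e^{\beta U(M')}\}}. \]
Now I would bound the numerator from above by $e^{2\beta U_{\mathrm{max}}}$ (using $U(I),U(F)\leq U_{\mathrm{max}}$) and each of the two factors in the denominator from below by $e^{\beta U_{\mathrm{min}}}$ (using $U(\sig),U(M),U(M')\geq U_{\mathrm{min}}$). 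This yields the ratio bound $mn\,e^{2\beta(U_{\mathrm{max}}-U_{\mathrm{min}})}=mn\,a_{\mathrm{max}}^2$. Since $a_{\mathrm{max}}\geq 1$ we have $mn\,a_{\mathrm{max}}^2\leq 2mn\,a_{\mathrm{max}}^3$, which is exactly the claimed bound $\pi_I\pi_F\leq 2mn\,a_{\mathrm{max}}^3\,\pi_{\sig}w_e$.

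The computation is short, so the only real care is in the first step: correctly reading off the transition probability from Algorithm~\ref{alg:mcmc} (the $1/(mn)$ selection factor combined with the Metropolis acceptance) and invoking the reversibility identity $\pi_M P_{MM'}=\pi_{M'}P_{M'M}$ of Lemma~\ref{lem:reversible}, which makes $w_e$ symmetric so that the lower bound $w_e\geq e^{\beta U_{\mathrm{min}}}/(mnZ)$ holds for either orientation of $e$. I would emphasise that, unlike in Theorem~\ref{th:con}, the finer structural properties of the encoding $\sig=\sigma_e(I,F)$ -- in particular its injectivity in the pair $(I,F)$ for fixed $e$, established in Lemma~\ref{lem:two} -- are \emph{not} needed for this per-pair estimate; the only fact about $\sig$ used here is that it is a member of $\cN$, so that $U(\sig)\geq U_{\mathrm{min}}$ and $\pi_{\sig}$ is a genuine stationary probability. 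That injectivity is instead invoked afterwards, when this inequality is summed over all $(I,F)\in\cC(e)$ to pass from Lemma~\ref{lem:one} to the conductance bound of Theorem~\ref{th:con}.
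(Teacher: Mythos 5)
Your proof is correct and takes essentially the same route as the paper's: both express $\pi_I,\pi_F,\pi_{\sig}$ via the Gibbs form, use the Metropolis transition probability to control $w_e=\pi_M P_{MM'}$, and bound all utility ratios crudely by $a_{\mathrm{max}}$. The only difference is bookkeeping --- by absorbing $\pi_M$ and the acceptance probability into a single $\min\{e^{\beta U(M)},e^{\beta U(M')}\}$ you obtain the slightly sharper constant $mn\,a_{\mathrm{max}}^{2}$, whereas the paper separately bounds $\pi_I\leqslant \pi_M a_{\mathrm{max}}$, $\pi_F\leqslant\pi_{\sig}a_{\mathrm{max}}$ and $a_{MM'}\geqslant a_{\mathrm{min}}$ to accumulate the stated $2mn\,a_{\mathrm{max}}^{3}$; your relaxation to the stated bound via $a_{\mathrm{max}}\geqslant 1$ is valid.
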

\begin{proof}
The stationary probability of any state $X$ is given by $\pi_{X} = e^{\beta U(X)}/C,$
where $\beta > 0$, $C$ is the partition function and $U(X)$ corresponds to the utility of state $X$. Also, $U_{\mathrm{max}}$ and $U_{\mathrm{min}}$ denotes the maximum and minimum utilities respectively. Then $\pi_{I} = e^{\beta U(I)}/C$,
\begin{eqnarray}
&=& \frac{e^{\beta U(I)}}{C}\left( \frac{e^{\beta U(M)}}{e^{\beta U(M)}} \right ), \nonumber \\
&=& \pi_{M}~e^{\beta(U(I) - U(M))}, \nonumber \\
&\leqslant & \pi_{M}~e^{\beta(U_{\mathrm{max}} - U_{\mathrm{min}})} \label{eq:PI}.
\end{eqnarray}
Consider the path from state $F$ to state $\sig$. Then $ \pi_{F} =e^{\beta U(F)}/C$,
\begin{eqnarray}
&=& \frac{e^{\beta U(F)}}{C}\left( \frac{e^{\beta U(\sig)}}{e^{\beta U(\sig)}} \right ), \nonumber \\
&=& \pi_{\sig}~e^{\beta(U(F) - U(\sig))}, \nonumber \\
&\leqslant & \pi_{\sig}~e^{\beta(U_{\mathrm{max}} - U_{\mathrm{min}})} \label{eq:PF}.
\end{eqnarray} 
For the transitions defined in Section \ref{sec:markov}, the probability of going from any state $i$ to any adjacent state $j$ is given by
\begin{equation}
P_{i, j} = \frac{1}{2mn}a_{ij},
\end{equation}
where $a_{ij} = \mathrm{min}\lbrace exp\{\beta(U(j) - U(i))\}, 1 \rbrace$ is the acceptance probability of the transition from $i$ to $j$. Combining equations \eqref{eq:PI} and \eqref{eq:PF} and later substituting $\pi_{M} = w_{e}/P_{M, M'}$, we get
\begin{eqnarray}
\pi_{I}\pi_{F} &\leqslant & \pi_{M}\pi_{\sig}~e^{2\beta(U_{\mathrm{max}} - U_{\mathrm{min}})}, \nonumber \\
& = & e^{2\beta(U_{\mathrm{max}} - U_{\mathrm{min}})}~\frac{w_{e}}{P_{M, M'}}~\pi_{\sig}. \label{eq:semifinal} 
\end{eqnarray}
Substituting $P_{M, M'} = \dfrac{1}{2mn}~a_{MM'}$ in \eqref{eq:semifinal} we get
\begin{eqnarray}
\pi_{I}\pi_{F} &\leqslant & \left(\frac{2mne^{2\beta(U_{\mathrm{max}} - U_{\mathrm{min}})}}{a_{MM'}}\right)~\pi_{\sig}~w_{e}, \nonumber\\
& \leqslant & \left(\frac{2mn~e^{2\beta(U_{\mathrm{max}} - U_{\mathrm{min}})}}{{a_{\mathrm{min}}}}\right)~\pi_{\sig}~w_{e}, \nonumber\\
& = & 2mn~e^{3\beta(U_{\mathrm{max}} - U_{\mathrm{min}})}~\pi_{\sig}~w_{e}, \label{eq:piIpiF}
\end{eqnarray}
where the minimum and maximum acceptance probabilities are given by $a_{\mathrm{min}} = exp\{\beta(U_{\mathrm{min}} - U_{\mathrm{max}})\}$ and $a_{\mathrm{max}} = exp\{\beta(U_{\mathrm{max}} - U_{\mathrm{min}})\}$ respectively. Thus $\pi_{I}\pi_{F} \leqslant  2mn~a_{\mathrm{max}}^{3}~\pi_{\sig}~w_{e}, $ and this proves Lemma \ref{lem:one}.
\end{proof}

\remove {
The following quantity, referred to as the congestion of an edge, uses a collection of canonical paths
to quantify to what amount that edge is overloaded:
\begin{equation}
\rho_{i,j} = \frac{1}{Q(i,j)} \displaystyle \sum_{ <i, j> \in \mathcal{P} (t)}\pi_{i}\pi_{j}|\gamma_{ij}|,
\end{equation}
where $Q(i, j)$ denotes the capacity of edge $(i,j)$ and the sum represents the total
flow through that edge according to the choice of canonical paths. The congestion of the whole graph
is then defined as $\bar{\rho}:=\mathrm{max}\lbrace\rho(i, j)\rbrace$. Low congestion implies that there are no bottlenecks in
the state space, and the chain can move around fast, which also suggests rapid mixing. 
}

\noindent {\it Proof ~ of ~ Theorem~ \ref{th:con}}:
Let $G$ be the bipartite graph with $m$ number of vertices on one side and $n$ number of vertices on the other side.  Let $H$ be the underlying graph associated with the Markov chain $\Ma(G)$, whose states are perfect matchings of dimension $m$. The conductance of $\Phi(H)$ is defined as
\begin{equation} \label{eq:cond}
\Phi(H) := \displaystyle \min_{{\substack{0 < |S| < \mathcal{|N|} \\ C_{S} \leqslant 1/2}}} \Phi_{S},
\end{equation}
where $\Phi_{S} = F_{S}/C_{S}$ and
\begin{eqnarray*}
C_{S} &:=& \displaystyle{\sum_{M \in S}\pi_{M}}, ~~~~~~~~~~~~~~~~~ \mathrm{the ~ capacity~ of~}S; \\
F_{S} &:=& \displaystyle \sum_{{\substack{M \in S \\ M' \in S^{c} }}}P_{M, M'}\pi_{M},~~~~~~~~~\mathrm{the~ ergodic~ flow~ out~ of~} S. 
\end{eqnarray*}
Thus for any such $S$ the aggregate weight of all paths crossing the cut $S$ to its complement $S^{c}$ in $\mathcal{N}$ is
\begin{equation} \label{eq:cs}
\displaystyle \sum_{I \in S, F \in S^{c}}\pi_{I}\pi_{F} = C_{S}C_{S^{c}} \geqslant \frac{C_{S}}{2}.
\end{equation}
 Consider an arbitrary transition $e$ of $\Ma$ that changes the states from $M = (M_{i})$ to $M' = (M'_{i})$. For any transition $e$, let $\mathcal{C}(e)$ be the set of ordered pairs $<I, F>$ of perfect matchings, such that $e$ is contained in the canonical path from $I$ to $F$. The canonical path from $I$ to $F$ is constructed as per Algorithm \ref{alg:canonicalpath}. For any such transition $e$, there exists a constant $`b$' such that 
 \begin{equation}\label{eq:b}
 \displaystyle \sum_{< I, F > \in \mathcal{C}(e)}\pi_{I}~\pi_{F} \leqslant b~ w_{e},
 \end{equation}
where $w_{e} = \pi_{M}P_{M, M'} = \pi_{M'}P_{M', M}$. Using \eqref{eq:cs} and \eqref{eq:b}, a bound on the ergodic flow out of $S$, where cut$(S)$ denotes the set of transitions crossing the cut from $S$ to $S^{c}$ is
\begin{eqnarray}
F_{S} &=& \displaystyle \sum_{e \in \mathrm{ cut}(S)} w_{e}\nonumber \\
&\geqslant & ~ b^{-1} \displaystyle \sum_{e \in \mathrm{ cut}(S)} \sum_{~~< I, F > \in \mathcal{C}(e)} \pi_{I}\pi_{F} \nonumber \\
&\geqslant & b^{-1} \displaystyle \sum_{I \in S, F \in S^{c}} \pi_{I} \pi_{F} \nonumber \\
&\geqslant & \frac{C_{S}}{2b}.\nonumber
\end{eqnarray}
Using the definition given in equation \eqref{eq:cond}
\begin{equation} \label{eq:condu}
\Phi(H) \geqslant \frac{1}{2b}.
\end{equation} 
 
 Our aim is to define a set of paths so as to find a bound on $b$. In  Lemma \ref{lem:one}, we have defined a set of paths and derived a bound for $\pi_{I}\pi_{F}$. Using \eqref{eq:piIpiF} we get 
\begin{eqnarray}
\displaystyle \sum_{<I, F> \in \mathcal{C}(e)}\pi_{I}\pi_{F} &\leqslant & \displaystyle \sum_{<I, F> \in \mathcal{C}(e)} 2mne^{3\beta(U_{\mathrm{max}} - U_{\mathrm{min}})}~\pi_{\sig}w_{e}, \nonumber \\
& = & 2mne^{3\beta(U_{\mathrm{max}} - U_{\mathrm{min}})}w_{e}\displaystyle \sum_{<I, F> \in \mathcal{C}(e)} \pi_{\sig}, \nonumber  \\
&\leqslant & 2mne^{3\beta(U_{\mathrm{max}} - U_{\mathrm{min}})}w_{e}. \label{eq:inj} 
\end{eqnarray}
Equation \eqref{eq:inj} holds as $|\cC_{e}| < |\cN|$ as proved in Lemma \ref{lem:two}.
Thus the constant $b$ in equation \eqref{eq:condu} is 
\begin{equation}
b =  2mn~exp\{3\beta(U_{\mathrm{max}} - U_{\mathrm{min}})\}.
\end{equation}
Thus the lower bound on the conductance of the underlying graph $H$ of the Markov chain $\Ma$ is derived as
\begin{eqnarray}
\Phi(H) &\geqslant & \frac{1}{4mn~e^{3\beta(U_{\mathrm{max}} - U_{\mathrm{min}})}}, \nonumber \\
 & = & \frac{1}{4mn\alpha^{3}}. \label{eq:condbound}  
\end{eqnarray}
Thus $\Phi(H) \geqslant 1/(4mn\alpha^{3})$ where $\alpha = exp\{\beta(U_{\mathrm{max}} - U_{\mathrm{min}})\}$ and this completes the proof of Theorem \ref{th:con}.  
\EP
 
\noindent {\it Proof ~ of ~ Theorem ~ \ref{th:time}}:
 By Proposition \ref{prop:main}, the mixing time of $\Ma$ is bounded above by
\begin{eqnarray}
\tau_\epsilon &\leqslant & 2\Phi^{-2}(\mathrm{ln}~ \pi_{min}^{-1} + \mathrm{ln}~ \epsilon^{-1}), \nonumber \\
 &\leqslant & 2(4mn\alpha^{3})^{2}(\mathrm{ln}~ \pi_{min}^{-1} + \mathrm{ln}~ \epsilon^{-1}), \nonumber \\
& = & 32~m^{2}n^{2}\alpha^{6}(\mathrm{ln}~ \pi_{min}^{-1} + \mathrm{ln}~ \epsilon^{-1}). \nonumber 
\end{eqnarray}
Using the definition of stationary probability, $\pi_{min} = e^{\beta U_{\mathrm{min}}}/C$. Thus
\begin{eqnarray}
\pi_{min} &\geqslant& \frac{e^{\beta U_{\mathrm{min}}}}{|\cN|e^{\beta U_{\rm max}}}, \nonumber \\
&\geqslant & |\cN |^{-1} e^{\beta (U_{\rm min}- U_{\rm max})}. \nonumber 
\end{eqnarray}
Note that the number of perfect matchings in $G$ is at most $n^{m}$. So the minimum stationary probability $\pi_{\rm min}$ of $\Ma$ satisfies
\begin{eqnarray}
\mathrm{ln}~(\pi_{min}) &\geqslant & \beta(U_{\mathrm{min}} - U_{\mathrm{max}}) - m~ {\rm ln}~ n. \nonumber 
\end{eqnarray}
Thus $ \tau_\epsilon \leqslant 32~m^{2}n^{2}\alpha^{6} \left(\beta(U_{\mathrm{max}} - U_{\mathrm{min}}) + m~{\rm ln}~n + \mathrm{ln}~ \epsilon^{-1}\right) $, where $\alpha = exp\{\beta(U_{\mathrm{max}} - U_{\mathrm{min}})\}$. This completes the proof of Theorem \ref{th:time}.
\EP
 
  The Theorem below proves that the Markov chain $\Ma$ is rapid mixing.  
\begin{theorem}
The Markov chain $\Ma$ is rapid mixing.
\end{theorem}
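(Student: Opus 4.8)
The plan is to derive the statement directly from the quantitative mixing-time bound of Theorem~\ref{th:time} and the threshold criterion of Definition~\ref{def:rapid}, so that no new combinatorial work is needed. Recall that Theorem~\ref{th:time} gives $\te \leqslant 32\,m^{2}n^{2}\alpha^{6}\bigl(\ln\alpha + m\ln n + \ln\epsilon^{-1}\bigr)$, where $\alpha = \exp\{\beta(U_{\mathrm{max}}-U_{\mathrm{min}})\}$ and $\ln\alpha = \beta(U_{\mathrm{max}}-U_{\mathrm{min}})$. Since we work throughout with a fixed $\beta<\infty$ and since $m\leqslant n$, the prefactor $32\,m^{2}n^{2}$ and the additive term $m\ln n$ are already $O(\mathrm{poly}(n))$; the only quantity that can break polynomiality is $\alpha^{6}$ (together with the factor $\ln\alpha$), and this is governed entirely by the utility spread $U_{\mathrm{max}}-U_{\mathrm{min}}$.

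First I would treat $\beta$ as a constant and invoke the structural observation recorded after Theorem~\ref{th:time}: for the feasibility-type instances of interest — job scheduling, graph colouring, multiple knapsack — the spread $U_{\mathrm{max}}-U_{\mathrm{min}}$ is either a fixed constant or grows at most logarithmically in the input size. In either regime $\beta(U_{\mathrm{max}}-U_{\mathrm{min}}) = O(\log n)$, whence $\alpha = O(\mathrm{poly}(n))$ and in particular $\alpha^{6}=O(\mathrm{poly}(n))$ and $\ln\alpha = O(\log n)$. Substituting these estimates into the bound, the first two summands inside the parentheses contribute $O(\mathrm{poly}(n))$ and the third contributes $O(\log\epsilon^{-1})$, so that $\te = O\bigl(\mathrm{poly}(n)\,(\,\mathrm{poly}\text{-}\log(n)+\log\epsilon^{-1})\bigr)$. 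This is exactly the polynomial-in-$n$, logarithmic-in-$1/\epsilon$ scaling required by Definition~\ref{def:rapid}, which establishes rapid mixing of $\Ma$.

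I expect the main obstacle to be conceptual rather than computational: the argument is not unconditional. If $U_{\mathrm{max}}-U_{\mathrm{min}}$ were permitted to grow polynomially (or faster) in $n$, then $\alpha=\exp\{\beta(U_{\mathrm{max}}-U_{\mathrm{min}})\}$ would be exponentially large and the bound of Theorem~\ref{th:time} would degrade to an exponential one, so rapid mixing would fail. This is not a weakness of the estimate but is forced by the NP-hardness of the general problem proved in Theorem~\ref{th:npcomplete}: one cannot expect polynomial-time sampling for arbitrary utilities unless $\mathrm{P}=\mathrm{NP}$. Consequently the real content of this final step is to make explicit the logarithmic-spread hypothesis under which the conductance lower bound $\Phi(H)\geqslant 1/(4\alpha^{3}mn)$ of Theorem~\ref{th:con} propagates into a genuinely polynomial mixing time; once that hypothesis is in force, the remainder is the routine substitution sketched above.
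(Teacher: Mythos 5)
Your core computation is the same as the paper's: both proofs substitute the bound of Theorem~\ref{th:time} into Definition~\ref{def:rapid} and observe that the only obstruction to polynomiality is $\alpha=\exp\{\beta(U_{\mathrm{max}}-U_{\mathrm{min}})\}$, so that rapid mixing follows whenever the utility spread is constant or $O(\log n)$. Where you diverge is at the final step. You stop there and explicitly declare the result conditional on the logarithmic-spread hypothesis, justifying the restriction by appeal to the NP-hardness of Theorem~\ref{th:npcomplete}. The paper instead pushes for the unconditional statement by one further move: it argues that even when $U_{\mathrm{max}}-U_{\mathrm{min}}$ grows polynomially, one may optimize $\ln U$ in place of $U$ (monotonicity of $\ln$ preserves the argmax), thereby ``making'' the spread logarithmic and concluding rapid mixing without qualification.

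Since the theorem as stated is unconditional, your proof as written does not quite establish it --- that is the one genuine gap relative to the claim. But you should be aware that the paper's own patch is itself fragile: $\ln U$ is undefined when $U$ takes non-positive values (as it does in the paper's own graph-colouring and knapsack reductions, where infeasible matchings receive negative utility); $\ln U_{\mathrm{max}}-\ln U_{\mathrm{min}}$ need not be $O(\log n)$ if $U_{\mathrm{min}}$ is close to zero; and replacing $U$ by $\ln U$ changes the Gibbs distribution and hence the chain $\Ma$ being analyzed, so the stationary distribution no longer concentrates on the same set in the same way. Your conditional formulation, together with the complexity-theoretic explanation of why an unconditional polynomial bound cannot be expected, is arguably the more defensible reading of what Theorems~\ref{th:con} and~\ref{th:time} actually deliver; what it buys you is honesty about the hypothesis, at the cost of not matching the unqualified wording of the theorem. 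If you want to track the paper exactly, you would need to add (and then scrutinize) the $\ln U$ reparametrization step.
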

\begin{proof}
From Theorem \ref{th:time}, we know that $ \tau_\epsilon \leqslant 32~m^{2}n^{2}\alpha^{6} (-\beta(U_{\mathrm{max}} - U_{\mathrm{min}}) + m~{\rm ln}~n + \mathrm{ln}~ \epsilon^{-1}) $, where $\alpha = \exp\{\beta(U_{\mathrm{max}} - U_{\mathrm{min}})\}$. The mixing time is a function in $(U_{\mathrm{max}} - U_{\mathrm{min}})$. Thus by Definition \ref{def:rapid}, the chain $\Ma$ is rapid mixing if $(U_{\mathrm{max}} - U_{\mathrm{min}})$ is a constant or it is increasing logarithmically with input size. However, even when it is increasing polynomially in input size, maximizing $U$ is same as maximizing ln$~U$, since $\rm ln$ is a monotonically increasing function. Thus instead of optimizing for $U$, we can always do optimization with respect to ${\rm ln}~U$. Thus $(U_{\mathrm{max}} - U_{\mathrm{min}})$ can be made a function that increase logarithmically in input size. Thus the chain $\Ma$ is rapid mixing.
\end{proof} 
\remove{
\begin{proof}
The conductance of the underlying graph associated with the Markov chain $\Ma$ is 
\begin{equation*}
\Phi(H) \geqslant 1/(4mn\alpha^{3}),
\end{equation*}
where $\alpha = e^{\beta(U_{\mathrm{max}} - U_{\mathrm{min}})}$ and $m$ and $n$ are the number of nodes on the respective sides of the bipartite graph. Using the definition of stationary probability, $\pi_{min} = e^{\beta U_{\mathrm{min}}}/C$. Thus
\begin{eqnarray}
\pi_{min} &\geqslant& \frac{e^{\beta U_{\mathrm{min}}}}{|\cN|e^{\beta U_{\rm max}}}, \nonumber \\
&\geqslant & |\cN |^{-1} e^{\beta (U_{\rm min}- U_{\rm max})}. \nonumber 
\end{eqnarray}
Note that the number of perfect matchings in $G$ is at most $n^{m}$. So the minimum stationary probability $\pi_{\rm min}$ of $\Ma$ satisfies
\begin{eqnarray}
\mathrm{ln}~(\pi_{min}) &\geqslant & -c~m~ {\rm ln}~ n, \nonumber 
\end{eqnarray}
for some constant $c$. Thus by Theorem \ref{th:con} and the characterisation of Proposition \ref{prop:pimin}, the Markov chain $\Ma$ is rapidly mixing.

\end{proof}
}
\remove{
\section{Simulation Results}\label{sec:simulations}
We conducted simulations for colouring bipartite graphs with two colours using the proposed algorithm (see Section 1.2). Here, the number of colours $K=2$, and the utilities are assigned as 100 for every feasible solution and $-100$ for every non feasible solution. Thus, $U_{\rm max} = 100$ and $U_{\rm min} = -100$. The input given to the algorithm is a random bipartite graph of given number of nodes. Since the graph is bipartite, it is always possible to colour it using two colours. Thus the algorithm should return a feasible colouring with utility = 100 and the number of iterations taken by the algorithm for finding a feasible solution gives an indication of the mixing time of the Markov chain. Every data point in this plot is obtained by running the algorithm for 50 different graphs of corresponding node size and averaging the number of iterations taken for reaching a feasible solution. We then did polynomial curve fitting to these data points for finding a good polynomial fit for the time taken to converge. Figure \ref{fig:poly} shows the polynomial fit to these data points for different order polynomials. Also, the error corresponding to each these polynomial fit is shown in Figure \ref{fig:error}. The above figures shows that an
$8^{\rm th}$ degree polynomial can give a good fit to the data points.


\begin{figure}[h!]
\centerline{
\resizebox{8.1cm}{6.6cm}{
\input{poly.pdf_t}
}
}
\caption{Polynomial fit of different orders in to the data points}\label{fig:poly}
\end{figure}

\begin{figure}[h!]
\centerline{
\resizebox{8.1cm}{6.6cm}{
\input{error.pdf_t}
}
}
\caption{Error between the polynomial fit and the actual data}\label{fig:error}
\end{figure}
}
\section{Conclusion}\label{sec:conclu}
The matching problem considered in this paper is a complete bipartite graph matching problem with the objective of maximizing a global utility function. Conventional matching algorithms cannot be used for solving this because of the global nature of the utility function. We proved the NP-hardness of the problem by showing a reduction of the well known 3-SAT problem. Thus there are no known polynomial time algorithm for solving this. To this end, we constructed a Markov chain whose every state is a perfect matching and the state space is all possible perfect matchings. The transitions in the chain are modelled in such a way that in every transition the chain favours a higher utility state over other adjacent states. The chain is a DTMC which is aperiodic, irreducible and time reversible. The rapid mixing of the chain is proved by finding a lower bound on the conductance of the underlying graph using a canonical path construction. 

\bibliographystyle{abbrv}
\bibliography{myreferences}

\appendix

\remove{
\section{Preliminaries}
\subsection{Markov Chain}
Consider a set $\Omega$ and let $X_{1}, X_{2},\cdots$ be a random sequence in $\Omega$. This sequence is called a $Markov~ Chain$  if the conditional distribution of $X_{n+1}$ given $X_{1}, X_{2}, \cdots, X_{n}$ depends only on $X_{n}$ and the set $\Omega$ is called $state~ space$ of the Markov Chain. The marginal distribution of $X_{1}$ is called the $initial~ distribution$ and the conditional distribution of $X_{n+1}$ given $X_{n}$, 
${\prob}~(X_{n+1}|X_{n})$ is called the $transition~ probability~ distribution$.

A Markov chain $\Mb$ on a state space $\Omega$ is represented by a probability transition matrix $P \in \mathbb{R}^{|\Omega| \times |\Omega|}$, where $P_{i,j}$ is the probability of state transition from state $i$ to state $j$ in a single step \cite{Gur:00}. The matrix $P$ is a stochastic matrix that satisfies the following conditions:
\[P\geqslant 0,~ P{\bf1} = {\bf1},\] where ${\bf1}$ is the vector of all ones. A Markov chain is characterized by its ``memorylessness property'' that gives $P~(X_{n+1}|X_{1}, X_{2}, \cdots, X_{n}) = P~(X_{n+1}|X_{n})$. Thus the $(i, j)^{\mathrm{th}}$ entry of matrix $P$ is given by
\[P_{i,j} = P~(X_{n+1} = j|X_{n} = i).\]
Definition \ref{def:one} given below describes the stationary distribution of a Markov chain.
\begin{defn} \label{def:one}
A row vector $\pi \in \mathbb{R}^{|\Omega|}$ is a stationary distribution of a Markov chain $\Mb$ with transition probability matrix $P$ if:
\begin{eqnarray}
\pi_{x} &\geqslant& 0~~ for~ all~ x \in \Omega, \nonumber \\ 
\sum_{x \in \Omega}~ \pi_{x} &=& 1, \nonumber \\
\pi &=& \pi~P. \nonumber
\end{eqnarray}
\end{defn}

A Markov chain is said to be ergodic if 
$ \displaystyle {\lim _{n \rightarrow \infty}} P~(X_{n} = j|X_{1} = i) = \pi_{j}.$
The ergodicity of the Markov chain guarantees that while sampling from a distribution, for a sufficiently large number of steps, say $T$, starting from any initial distribution the Markov chain converges to the desired distribution $\pi$. 
\begin{defn}
A Markov chain $\Mb$ is said to be ergodic if it is aperiodic and irreducible.
\end{defn}
The definitions of irreducibility and aperiodicity is given below.
\begin{defn}
A Markov chain $\Mb$ (with transition matrix P) is said to be irreducible if for all  $i, ~j \in \Omega$, there is an $m$ such that $P^{m}_{i, j} \geqslant 0$, i.e $j$ is eventually reachable from $i$ with non-zero probability.
\end{defn}
Irreducibility renders the chain the ability to reach any state from any initial state. More precisely, the graph associated with the chain is connected so that we can reach any node (state) from any other node (state).  

\begin{defn}
A chain $\Mb$ is said to be aperiodic if and only if for all $i \in \Omega$, gcd$\lbrace m: P^{m}_{i, j} > 0 \rbrace = 1$.
\end{defn}
Existence of self loops ``rules out'' the chance of periodicity in a Markov chain. Presence of at least one self loop makes the chain aperiodic.

\begin{defn}
A Markov chain with invariant measure $\pi$
is called reversible if $\pi_{i}P_{ij} = \pi_{j}P_{ji}$ for all states $i$ and $j$.
\end{defn}
The reversibility condition guarantees that the simulation of the Markov chain in the reverse order is identical to the actual one.
}
\section{Construction of Canonical Path}
\begin{figure*}[t]
\definecolor{blue}{RGB}{80,80,160}
\definecolor{green}{RGB}{80,160,80} 
    \begin{subfigure}[b]{0.2 \textwidth}
       \centering
\begin{tikzpicture} [scale = 0.3]
       \draw [thick] (-2, 0)   --   (2,-2);
       \draw [thick] (-2, -2)  --   (2, -4);
       \draw [thick] (-2, -4)  --   (2,-6);
       \draw [thick] (-2, -6)   --   (2,0);
       \draw [thick] (-2, -8)  --   (2, -10);
       \draw [thick] (-2, -10)  --   (2,-12);      
          
          \node at (-2.8,0) {\small $1$};
          \node at (-2.8,-2.0) {\small $2$};
          \node at (-2.8,-4.0) {\small $3$};
          \node at (-2.8,-6) {\small $4$};
          \node at (-2.8,-8) {\small $5$};
          \node at (-2.8,-10) {\small $6$};
          \fill[blue] (-2,0) circle (10.0 pt);
          \fill[blue] (-2,-2) circle (10.0 pt);
          \fill[blue] (-2,-4) circle (10.0 pt);
          \fill[blue] (-2,-6) circle (10.0 pt);
          \fill[blue] (-2,-8) circle (10.0 pt);
          \fill[blue] (-2,-10) circle (10.0 pt);
          \node at (2.8,0) {\small $1'$};
          \node at (2.8,-2.0) {\small $2'$};
          \node at (2.8,-4.0) {\small $3'$};
          \node at (2.8,-6) {\small $4'$};
          \node at (2.8,-8) {\small $5'$};
          \node at (2.8,-10) {\small $6'$};
          \node at (2.8,-12) {\small $7'$};
          \fill[green] (2,0) circle (10.0 pt); 
          \fill[green] (2,-2) circle (10.0 pt);
          \fill[green] (2,-4) circle (10.0 pt);
          \fill[green] (2,-6) circle (10.0 pt); 
          \fill[green] (2,-8) circle (10.0 pt);
          \fill[green] (2,-10) circle (10.0 pt);
          \fill[green] (2,-12) circle (10.0 pt);
      \end{tikzpicture}
        \caption{$I = M_1$}
        \label{fig:I}
    \end{subfigure} ~\hspace{-5 mm}
    \begin{subfigure}[b]{0.2 \textwidth}
       \centering
       \begin{tikzpicture} [scale = 0.3]
       \draw [thick, red] (-2, 0)   --   (2,-6);
       \draw [thick, red] (-2, -2)  --   (2, -2);
       \draw [thick, red] (-2, -4)  --   (2,-8);      
       \draw [thick, red] (-2, -6)  --   (2,0)  ;
       \draw [thick, red] (-2, -8)  --   (2,-12);
       \draw [thick, red] (-2, -10) --   (2,-10);      
          
          \node at (-2.8,0) {\small $1$};
          \node at (-2.8,-2.0) {\small $2$};
          \node at (-2.8,-4.0) {\small $3$};
          \node at (-2.8,-6) {\small $4$};
          \node at (-2.8,-8) {\small $5$};
          \node at (-2.8,-10) {\small $6$};
          \fill[blue] (-2,0) circle (10.0 pt);
          \fill[blue] (-2,-2) circle (10.0 pt);
           \fill[blue] (-2,-4) circle (10.0 pt);
          \fill[blue] (-2,-6) circle (10.0 pt);
          \fill[blue] (-2,-8) circle (10.0 pt);
           \fill[blue] (-2,-10) circle (10.0 pt);
           
          \node at (2.8,0) {\small $1'$};
          \node at (2.8,-2.0) {\small $2'$};
          \node at (2.8,-4.0) {\small $3'$};
          \node at (2.8,-6) {\small $4'$};
          \node at (2.8,-8) {\small $5'$};
          \node at (2.8,-10) {\small $6'$};
          \node at (2.8,-12) {\small $7'$};
          \fill[green] (2,0) circle (10.0 pt); 
          \fill[green] (2,-2) circle (10.0 pt);
          \fill[green] (2,-4) circle (10.0 pt);
          \fill[green] (2,-6) circle (10.0 pt); 
          \fill[green] (2,-8) circle (10.0 pt);
          \fill[green] (2,-10) circle (10.0 pt);
          \fill[green] (2,-12) circle (10.0 pt);
      \end{tikzpicture}
        \caption{$F$}
        \label{fig:F}
    \end{subfigure} ~\hspace{-5 mm}
    \begin{subfigure}[b]{0.2 \textwidth}
    \centering
       \begin{tikzpicture} [scale = 0.3]
       \draw [thick] (-2, 0)   --   (2,-2);
       \draw [thick, red] (-2, 0)  --   (2, -6);
       \draw [thick, red] (-2, -2)  --   (2,-2);
       \draw [thick] (-2, -2)  --   (2,-4);     
       \draw [thick] (-2, -4)  --   (2,-6)  ;
       \draw [thick, red] (-2, -4)  --   (2,-8);
       \draw [thick, red] (-2, -8) --   (2,-12);   
       \draw [thick] (-2, -8) --   (2,-10); 
       \draw [thick, red] (-2, -10) --   (2,-10); 
       \draw [thick] (-2, -10) --   (2,-12);   
          
          \node at (-2.8,0) {\small $1$};
          \node at (-2.8,-2.0) {\small $2$};
          \node at (-2.8,-4.0) {\small $3$};
          \node at (-2.8,-8) {\small $5$};
          \node at (-2.8,-10) {\small $6$};
          \fill[blue] (-2,0) circle (10.0 pt);
          \fill[blue] (-2,-2) circle (10.0 pt);
           \fill[blue] (-2,-4) circle (10.0 pt);
          \fill[blue] (-2,-8) circle (10.0 pt);
          \fill[blue] (-2,-10) circle (10.0 pt);
          \node at (2.8,-2.0) {\small $2'$};
          \node at (2.8,-4.0) {\small $3'$};
          \node at (2.8,-6) {\small $4'$};
          \node at (2.8,-8) {\small $5'$};
          \node at (2.8,-10) {\small $6'$};
          \node at (2.8,-12) {\small $7'$};
          \fill[green] (2,-2) circle (10.0 pt);
          \fill[green] (2,-4) circle (10.0 pt);
          \fill[green] (2,-6) circle (10.0 pt); 
          \fill[green] (2,-8) circle (10.0 pt);
          \fill[green] (2,-10) circle (10.0 pt);
          \fill[green] (2,-12) circle (10.0 pt);
      \end{tikzpicture}
        \caption{$I \oplus F$}
        \label{fig:IF}
    \end{subfigure} ~\hspace{-5 mm}
       \begin{subfigure}[b]{0.2 \textwidth}
       \centering
       \begin{tikzpicture} [scale = 0.3]
       \draw [thick, red] (-2, 0)   --   (2,-6);
       \draw [thick] (-2, -2)  --   (2, -4);
       \draw [thick] (-2, -4)  --   (2,-2);      
       \draw [thick, red] (-2, -6)  --   (2,0)  ;
       \draw [thick] (-2, -8)  --   (2,-10);
       \draw [thick] (-2, -10) --   (2,-12);      
          
          \node at (-2.8,0) {\small $1$};
          \node at (-2.8,-2.0) {\small $2$};
          \node at (-2.8,-4.0) {\small $3$};
          \node at (-2.8,-6) {\small $4$};
          \node at (-2.8,-8) {\small $5$};
          \node at (-2.8,-10) {\small $6$};
          \fill[blue] (-2,0) circle (10.0 pt);
          \fill[blue] (-2,-2) circle (10.0 pt);
           \fill[blue] (-2,-4) circle (10.0 pt);
          \fill[blue] (-2,-6) circle (10.0 pt);
          \fill[blue] (-2,-8) circle (10.0 pt);
          \fill[blue] (-2,-10) circle (10.0 pt);
          \node at (2.8,0) {\small $1'$};
          \node at (2.8,-2.0) {\small $2'$};
          \node at (2.8,-4.0) {\small $3'$};
          \node at (2.8,-6) {\small $4'$};
          \node at (2.8,-8) {\small $5'$};
          \node at (2.8,-10) {\small $6'$};
          \node at (2.8,-12) {\small $7'$};
          \fill[green] (2,0) circle (10.0 pt); 
          \fill[green] (2,-2) circle (10.0 pt);
          \fill[green] (2,-4) circle (10.0 pt);
          \fill[green] (2,-6) circle (10.0 pt); 
          \fill[green] (2,-8) circle (10.0 pt);
          \fill[green] (2,-10) circle (10.0 pt);
          \fill[green] (2,-12) circle (10.0 pt);
      \end{tikzpicture}
        \caption{$ M_2$}
        \label{fig:M2}
    \end{subfigure} ~\hspace{-5 mm}
   \begin{subfigure}[b]{0.2 \textwidth}
    \centering
      \begin{tikzpicture} [scale = 0.3]
       \draw [thick, red] (-2, -2)  --   (2,-2);
       \draw [thick] (-2, -2)  --   (2,-4);     
       \draw [thick] (-2, -4)  --   (2,-2)  ;
       \draw [thick, red] (-2, -4)  --   (2,-8);
       \draw [thick, red] (-2, -8) --   (2,-12);   
       \draw [thick] (-2, -8) --   (2,-10); 
       \draw [thick, red] (-2, -10) --   (2,-10); 
       \draw [thick] (-2, -10) --   (2,-12);   
          
          \node at (-2.8,-2.0) {\small $2$};
          \node at (-2.8,-4.0) {\small $3$};
          \node at (-2.8,-8) {\small $5$};
          \node at (-2.8,-10) {\small $6$};
          \fill[blue] (-2,-2) circle (10.0 pt);
           \fill[blue] (-2,-4) circle (10.0 pt);
          \fill[blue] (-2,-8) circle (10.0 pt);
          \fill[blue] (-2,-10) circle (10.0 pt);
          \node at (2.8,-2.0) {\small $2'$};
          \node at (2.8,-4.0) {\small $3'$};
          \node at (2.8,-8) {\small $5'$};
          \node at (2.8,-10) {\small $6'$};
          \node at (2.8,-12) {\small $7'$};
          \fill[green] (2,-2) circle (10.0 pt);
          \fill[green] (2,-4) circle (10.0 pt);
          \fill[green] (2,-8) circle (10.0 pt);
          \fill[green] (2,-10) circle (10.0 pt);
          \fill[green] (2,-12) circle (10.0 pt);
      \end{tikzpicture}
        \caption{$M_2 \oplus F$}
        \label{fig:M2F}
    \end{subfigure} ~\hspace{-5 mm}
    \newline
    
   \begin{subfigure}[b]{0.2\textwidth}
    \centering
     \begin{tikzpicture} [scale = 0.3]    
       \draw [thick, red] (-2, 0)   --   (2,-6);
       \draw [thick, red] (-2, -2)  --   (2, -2);
       \draw [thick] (-2, -4)  --   (2,-4);      
       \draw [thick, red] (-2, -6)  --   (2,0)  ;
       \draw [thick] (-2, -8)  --   (2,-10);
       \draw [thick] (-2, -10) --   (2,-12);      
          
          \node at (-2.8,0) {\small $1$};
          \node at (-2.8,-2.0) {\small $2$};
          \node at (-2.8,-4.0) {\small $3$};
          \node at (-2.8,-6) {\small $4$};
          \node at (-2.8,-8) {\small $5$};
          \node at (-2.8,-10) {\small $6$};
          \fill[blue] (-2,0) circle (10.0 pt);
          \fill[blue] (-2,-2) circle (10.0 pt);
           \fill[blue] (-2,-4) circle (10.0 pt);
          \fill[blue] (-2,-6) circle (10.0 pt);
          \fill[blue] (-2,-8) circle (10.0 pt);
           \fill[blue] (-2,-10) circle (10.0 pt);
           
          \node at (2.8,0) {\small $1'$};
          \node at (2.8,-2.0) {\small $2'$};
          \node at (2.8,-4.0) {\small $3'$};
          \node at (2.8,-6) {\small $4'$};
          \node at (2.8,-8) {\small $5'$};
          \node at (2.8,-10) {\small $6'$};
          \node at (2.8,-12) {\small $7'$};
          \fill[green] (2,0) circle (10.0 pt); 
          \fill[green] (2,-2) circle (10.0 pt);
          \fill[green] (2,-4) circle (10.0 pt);
          \fill[green] (2,-6) circle (10.0 pt); 
          \fill[green] (2,-8) circle (10.0 pt);
          \fill[green] (2,-10) circle (10.0 pt);
          \fill[green] (2,-12) circle (10.0 pt);
           \end{tikzpicture}
        \caption{$M_3$}
        \label{fig:M3}
    \end{subfigure}~\hspace{-4 mm}
   \begin{subfigure}[b]{0.2 \textwidth}
    \centering
      \begin{tikzpicture} [scale = 0.3]
       \draw [thick] (-2, -4)  --   (2,-4)  ;
       \draw [thick, red] (-2, -4)  --   (2,-8);
       \draw [thick, red] (-2, -8) --   (2,-12);   
       \draw [thick] (-2, -8) --   (2,-10); 
       \draw [thick, red] (-2, -10) --   (2,-10); 
       \draw [thick] (-2, -10) --   (2,-12);   
          
          \node at (-2.8,-4.0) {\small $3$};
          \node at (-2.8,-8) {\small $5$};
          \node at (-2.8,-10) {\small $6$};
           \fill[blue] (-2,-4) circle (10.0 pt);
          \fill[blue] (-2,-8) circle (10.0 pt);
          \fill[blue] (-2,-10) circle (10.0 pt);
          \node at (2.8,-4.0) {\small $3'$};
          \node at (2.8,-8) {\small $5'$};
          \node at (2.8,-10) {\small $6'$};
          \node at (2.8,-12) {\small $7'$};
          \fill[green] (2,-4) circle (10.0 pt);
          \fill[green] (2,-8) circle (10.0 pt);
          \fill[green] (2,-10) circle (10.0 pt);
          \fill[green] (2,-12) circle (10.0 pt);
      \end{tikzpicture}
        \caption{$M_3 \oplus F$}
        \label{fig:M3F}
    \end{subfigure}~\hspace{-4 mm}
       \begin{subfigure}[b]{0.2\textwidth}
        \centering
     \begin{tikzpicture} [scale = 0.3]    
       \draw [thick, red] (-2, 0)   --   (2,-6);
       \draw [thick, red] (-2, -2)  --   (2, -2);
       \draw [thick, red] (-2, -4)  --   (2,-8);      
       \draw [thick, red] (-2, -6)  --   (2,0)  ;
       \draw [thick] (-2, -8)  --   (2,-10);
       \draw [thick] (-2, -10) --   (2,-12);      
          
          \node at (-2.8,0) {\small $1$};
          \node at (-2.8,-2.0) {\small $2$};
          \node at (-2.8,-4.0) {\small $3$};
          \node at (-2.8,-6) {\small $4$};
          \node at (-2.8,-8) {\small $5$};
          \node at (-2.8,-10) {\small $6$};
          \fill[blue] (-2,0) circle (10.0 pt);
          \fill[blue] (-2,-2) circle (10.0 pt);
           \fill[blue] (-2,-4) circle (10.0 pt);
          \fill[blue] (-2,-6) circle (10.0 pt);
          \fill[blue] (-2,-8) circle (10.0 pt);
           \fill[blue] (-2,-10) circle (10.0 pt);
           
          \node at (2.8,0) {\small $1'$};
          \node at (2.8,-2.0) {\small $2'$};
          \node at (2.8,-4.0) {\small $3'$};
          \node at (2.8,-6) {\small $4'$};
          \node at (2.8,-8) {\small $5'$};
          \node at (2.8,-10) {\small $6'$};
          \node at (2.8,-12) {\small $7'$};
          \fill[green] (2,0) circle (10.0 pt); 
          \fill[green] (2,-2) circle (10.0 pt);
          \fill[green] (2,-4) circle (10.0 pt);
          \fill[green] (2,-6) circle (10.0 pt); 
          \fill[green] (2,-8) circle (10.0 pt);
          \fill[green] (2,-10) circle (10.0 pt);
          \fill[green] (2,-12) circle (10.0 pt);
           \end{tikzpicture}
        \caption{$M_4= M_5$}
        \label{fig:M4}
    \end{subfigure} ~\hspace{-4 mm}
       \begin{subfigure}[b]{0.2 \textwidth}
        \centering
      \begin{tikzpicture} [scale = 0.3]
       \draw [thick, red] (-2, -8) --   (2,-12);   
       \draw [thick] (-2, -8) --   (2,-10); 
       \draw [thick, red] (-2, -10) --   (2,-10); 
       \draw [thick] (-2, -10) --   (2,-12);   
          
          \node at (-2.8,-8) {\small $5$};
          \node at (-2.8,-10) {\small $6$};
          \fill[blue] (-2,-8) circle (10.0 pt);
          \fill[blue] (-2,-10) circle (10.0 pt);
          \node at (2.8,-10) {\small $6'$};
          \node at (2.8,-12) {\small $7'$};
          \fill[green] (2,-10) circle (10.0 pt);
          \fill[green] (2,-12) circle (10.0 pt);
      \end{tikzpicture}
        \caption{$M_4 \oplus F$}
        \label{fig:M4F}
    \end{subfigure} ~\hspace{-4 mm}
        \begin{subfigure}[b]{0.2 \textwidth}
       \centering
       \begin{tikzpicture} [scale = 0.3]
       \draw [thick, red] (-2, 0)   --   (2,-6);
       \draw [thick, red] (-2, -2)  --   (2, -2);
       \draw [thick, red] (-2, -4)  --   (2,-8);      
       \draw [thick, red] (-2, -6)  --   (2,0)  ;
       \draw [thick, red] (-2, -8)  --   (2,-12);
       \draw [thick, red] (-2, -10) --   (2,-10);      
          
          \node at (-2.8,0) {\small $1$};
          \node at (-2.8,-2.0) {\small $2$};
          \node at (-2.8,-4.0) {\small $3$};
          \node at (-2.8,-6) {\small $4$};
          \node at (-2.8,-8) {\small $5$};
          \node at (-2.8,-10) {\small $6$};
          \fill[blue] (-2,0) circle (10.0 pt);
          \fill[blue] (-2,-2) circle (10.0 pt);
           \fill[blue] (-2,-4) circle (10.0 pt);
          \fill[blue] (-2,-6) circle (10.0 pt);
          \fill[blue] (-2,-8) circle (10.0 pt);
           \fill[blue] (-2,-10) circle (10.0 pt);
           
          \node at (2.8,0) {\small $1'$};
          \node at (2.8,-2.0) {\small $2'$};
          \node at (2.8,-4.0) {\small $3'$};
          \node at (2.8,-6) {\small $4'$};
          \node at (2.8,-8) {\small $5'$};
          \node at (2.8,-10) {\small $6'$};
          \node at (2.8,-12) {\small $7'$};
          \fill[green] (2,0) circle (10.0 pt); 
          \fill[green] (2,-2) circle (10.0 pt);
          \fill[green] (2,-4) circle (10.0 pt);
          \fill[green] (2,-6) circle (10.0 pt); 
          \fill[green] (2,-8) circle (10.0 pt);
          \fill[green] (2,-10) circle (10.0 pt);
          \fill[green] (2,-12) circle (10.0 pt);
      \end{tikzpicture}
        \caption{$M_6 = M_7 = F$}
        \label{fig:M5}
        \end{subfigure}~\hspace{-5 mm}
    \caption{Constructing the canonical path $I = M_1 \rightarrow M_2 \rightarrow M_3 \rightarrow M_4  \rightarrow M_5  \rightarrow M_6 \rightarrow M_7 = F$.} \label{fig:canonicalpath}
\end{figure*}
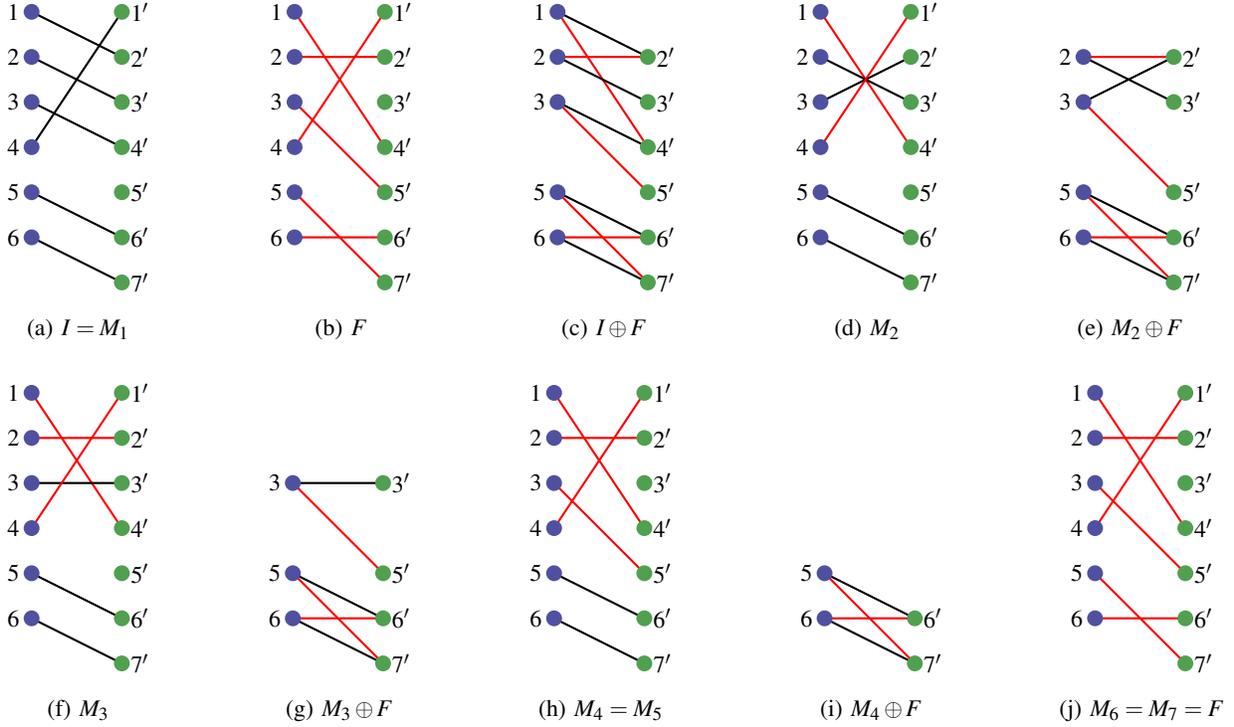
Figure \ref{fig:canonicalpath} can be explained in the following way. We start with the first node of the set $V_1$. In the above example it is clear that $V_1 = \{1,2,3,4,5,6 \}$. Select the lowest numbered node, i.e. $1$. The edges corresponding to node $1$ in $I \oplus F$ are $(1,2')$ and $(1,4')$, belonging to $I$ and $F$ respectively. Pick the edge $(1, 4')$ and perform the transition in $I$ to get $M_2$ as given in Figure \ref{fig:M2}. Note that $|M_2 \oplus F|< |I \oplus F|$. Now, the lowest degree node in $M_2 \oplus F$ is $2$ and the corresponding edge to be selected for transition is $(2,2')$. Perform the transition to get $M_3$ as shown in Figure \ref{fig:M3} and find $M_3 \oplus F$. Again, $|M_3 \oplus F|< |M_2 \oplus F|$. The next step is corresponding to node $3$ and the edge that should be used for transition is $(3, 5')$, which gives $M_4$ as shown in Figure \ref{fig:M4}. Since node $4$ is connected to $1$ in both $I$ and $F$, the next state $M_5 = M_4$. In the next stage, we select node $5$ and corresponding edge $(5,7')$ for transition. At the end of this transition we obtain matching $M_6$ which is indeed the desired matching $F$. This is because,  node $6$ is already resolved in the previous stage transition and thus $M_7 = M_6 = F$. 

\section{Some Proofs} 
\noindent {\it Proof of Lemma \ref{lem:DTMC}}:
Consider the discrete time stochastic process $\{X_t\}_{t\geqslant 0}$ where $X_t$ is a random variable that denotes the matching used by Algorithm \ref{alg:mcmc} in $t^{\rm th}$ iteration. The sequence $X_{0}, X_{1},\cdots $ is a random sequence as the chain $\Ma$ progress through the state space $\cN$. From the transition definition it is clear that the state at the $(t+1)^{\rm th}$ instant, $X_{t+1}$, depends only on the previous state $X_t$. That is, $P~(X_{t+1}|X_{0}, X_{1}, \cdots, X_{t}) = P~(X_{t+1}|X_{t})$. Thus $\Ma$ is a Discrete Time Markov Chain (DTMC) on $\cN$. 

In order to prove $\Ma$ is irreducible, we need to show that any arbitrary state $F$ can be reached from any arbitrary state $I$. From the unique construction of the canonical path (see Algorithm \ref{alg:canonicalpath}) it is clear that given any $I$, there exists a unique path of exactly $m$ transitions to reach state $F$. Let $\gamma_{IF}$ is the canonical path from $I$ to $F$. Then, $length~(\gamma_{IF}) = m$ $\forall~ I, F$. Thus $\Ma$ is irreducible.

Now we will prove that $\Ma$ is aperiodic. Let $M_{1}$ and $M_{2}$ be two adjacent states of $\Ma$, the acceptance of a transition from $M_{1}$ to $M_{2}$ is defined based on a utility comparison between the respective states. At $t^{\rm th}$ instant when state $X_{t} = M_{1}$ the transition probabilities to state $M_{2}$ is modelled as
\begin{eqnarray*} 
\small 
X_{t + 1} =
\begin{cases}
M_{2}~~~ \mbox{if}~ U(M_{2}) \geqslant U(M_{1}), \\
M_{2}~~~ \mathrm{with~ prob.}~ \exp\{\beta(U(M_2) - U(M_1))\}, \mbox{ otherwise }.
\end{cases}
\end{eqnarray*}
Thus when  $U(M_{2}) \geqslant U(M_{1})$, the state $X_{t+1}$ is $M_2$ with probability 1. However, when $ U(M_{2}) < U(M_{1})$, then transition from $M_1$ to $M_2$ is with certain probability. The probability associated with the transition introduces self loops making the chain $\Ma$ aperiodic. Thus the chain $\Ma$ is a DTMC which is irreducible and aperiodic. 
\EP

\noindent {\it  Proof of Lemma \ref{lem:reversible}}:
The DTMC $\cMb$  is said to be reversible if for any two adjacent states $M_1$ and $M_2$ in $\cN$, $\pi_{M_1}(\beta)P_{M_1M_2}(\beta) = \pi_{M_2}(\beta)P_{M_2M_1}(\beta)$. Every entry $P_{ij}(\beta)$ of the probability transition matrix $P(\beta)$ is given by
\[ P_{ij}(\beta) = \frac{1}{2mn}a_{ij}(\beta), \]
where $m$ and $n$ denotes the number of nodes in the respective sets of the bipartite graph and $a_{ij}(\beta)$ is the acceptance probability from state $i$ to state $j$. We know $a_{ij}(\beta) = \mbox{min}~\{1, \exp\{\beta(U(j) - U(i))\}\}$. Thus
\begin{eqnarray*} 
a_{M_1M_2}(\beta) =
\begin{cases}
1~~~~~~~~~~~~~~~~~~~~~~~~~~~~~~~~ \mbox{if}~ U(M_{2}) \geqslant U(M_{1}), \\
exp\{\beta(U(M_2) - U(M_1))\}~~~~ \mbox{if}~ U(M_{2}) < U(M_{1}).
\end{cases}
\end{eqnarray*}
Now consider two cases: (a) $U(M_2) \geqslant U(M_1)$ and (b) $U(M_2) < U(M_1)$. When $U(M_2) \geqslant U(M_1)$, 
\begin{eqnarray}
\pi_{M_1}(\beta)P_{M_1M_2}(\beta) &=& \frac{\exp\{\beta U(M_1)\}}{\sum_{M^\prime\in \mathcal{N}}\exp\{\beta U(M^\prime)\}} \times \frac{1}{2|E|} \times 1, \nonumber \\
\pi_{M_2}(\beta)P_{M_2M_1}(\beta) &=& \frac{\exp\{\beta U(M_2)\}}{\sum_{M^\prime\in \mathcal{N}}\exp\{\beta U(M^\prime)\}} \times \frac{1}{2|E|} \times \frac{\exp\{\beta U(M_1) \}}{\exp\{\beta U(M_2) \}}  \nonumber \\
&=& \frac{\exp\{\beta U(M_1)\}}{\sum_{M^\prime\in \mathcal{N}}\exp\{\beta U(M^\prime)\}} \times \frac{1}{2|E|}, \nonumber \\
&=& \pi_{M_1}(\beta)P_{M_1M_2}(\beta). \nonumber
\end{eqnarray}

When $U(M_2) < U(M_1)$, 
\begin{eqnarray}
\pi_{M_1}(\beta)P_{M_1M_2}(\beta)  &=& \frac{\exp\{\beta U(M_1)\}}{\sum_{M^\prime\in \mathcal{N}}\exp\{\beta U(M^\prime)\}} \times \frac{1}{2|E|}\times \frac{\exp\{\beta U(M_2) \}}{\exp\{\beta U(M_1) \}} , \nonumber \\
 &=&   \frac{\exp\{\beta U(M_2)\}}{\sum_{M^\prime\in \mathcal{N}}\exp\{\beta U(M^\prime)\}}\times \frac{1}{2|E|}, \nonumber \\
\pi_{M_2}(\beta)P_{M_2M_1}(\beta)   &=&   \frac{\exp\{\beta U(M_2)\}}{\sum_{M^\prime\in \mathcal{N}}\exp\{\beta U(M^\prime)\}} \times \frac{1}{2|E|} \times 1  \nonumber \\
 &=&  \pi_{M_1}(\beta)P_{M_1M_2}(\beta). \nonumber
\end{eqnarray}

\remove{Using similar arguments we can show that $\pi_{M_1}(\beta)P_{M_1M_2} (\beta) = \pi_{M_2}(\beta)P_{M_2M_1}(\beta)$ for case (b) as well.} 
\noindent
Thus DTMC $\cMb$ is time reversible and this completes the proof of
Lemma \ref{lem:reversible}.
\EP

%
%

\remove{

\subsection{Markov Chain Monte Carlo (MCMC) }
 Markov Chain Monte Carlo is a widely accepted sampling method in various field like physics, econometrics, statistics and computing science. There are several high-dimensional problems, for which MCMC simulation is the only known general approach for providing a solution within a reasonable time \cite{AndDefDouJor:03}. MCMC  techniques are widely applied for solving integration  and optimization problems in large dimensional spaces. The idea of Monte Carlo simulation is to draw an i.i.d. set of samples $\lbrace x^{(i)} \rbrace_{i = 1}^{N}$ from a target density $P(x)$ defined on a high-dimensional space $\mathcal{X}$, like the set of possible configurations of a system, the space on which the posterior is defined, or the combinatorial set of feasible solutions. These $N$ samples can be now used to approximate the target density.
 
\subsection{Bipartite Graph}
A graph $G(V, E)$ is a $bipartite~ graph$ \cite{Die:00} with vertex classes $V_{1}$ and $V_{2}$ if $V(G) = V_{1} \cup V_{2}$, $V_{1} \cap V_{2} = \phi$ and each edge $e \in E$ joins a vertex of $V_{1}$ to a vertex of $V_{2}$. A matching $M$ in $G$ is a collection of edges (subset of $E$) 
such that no two edges in the collection share the same endpoint, i.e. for any $(i,j)$ and $(u,v) \in M$, we have
$i \not= u$ and $j \not= v$. A matching $M \subseteq E$ is said to be perfect if for any $(i,j) \not\in M$,
$\{(i,j)\} \cup M$ is not a matching. 

\subsection{Matching}
Matching theory, a name referring to several loosely related research areas concerning matching,
allocation, and exchange of indivisible resources, such as jobs, school seats, houses and so on, lies at the
intersection of game theory, social choice theory, and mechanism design \cite{SonUnv:11}. Matching involves the
allocation or exchange of indivisible objects, such as dormitory rooms, transplant organs, courses,
summer houses and so on. Two-sided matching, deals with two parties, such as firms and workers, students and schools, or men and women, that need to be matched with each other.

Two-sided matching model and the concept of ``stable matchings'' was introduced by Gale and Shapley through the landmark \cite{GalSha:62} in the year 1962. Using an iterative algorithm known as the deferred acceptance algorithm, they proved that a stable matching always exists. More details of two-sided matching can be seen in \cite{KelAleCra:82}, \cite{ShaShu:71}. Later through the work of Roth \cite{Rot:84}, consequences of Gale-Shapley algorithm got unveiled, which resulted in the convergence of matching theory and game theoretical field applications.
}
\remove{
\begin{IEEEbiography}[\vspace{0mm}{\includegraphics[width=1in,height=1in,clip,keepaspectratio]{Shana.jpg}}\vspace{0mm}]
{Shana Moothedath} obtained her B.Tech. and M.Tech. in Electrical and Electronics Engineering from Kerala
University, India in 2011 and 2014 respectively. Currently she is pursuing Ph.D. in the
Department of Electrical Engineering, Indian Institute of Technology Bombay.
\end{IEEEbiography}

\begin{IEEEbiography}[{\includegraphics[width=0.9in,height=0.9in,clip,keepaspectratio]{chaporkar.jpg}}]
{Prasanna Chaporkar} received his M.S. in Faculty of Engineering from Indian Institute of Science, Bangalore, India in 2000, and Ph.D. from University of Pennsylvania, Philadelphia, PA in 2006. He was an ERCIM post-doctoral fellow at ENS, Paris, France and NTNU, Trondheim, Norway. Currently, he is an Associate Professor at Indian Institute of Technology Bombay. His research interests are in resource allocation, stochastic control, queueing theory, and distributed systems and algorithms.
\end{IEEEbiography}

\begin{IEEEbiography}[{\includegraphics[width=1in,height=1in,clip,keepaspectratio]{Madhu_Belur.jpg}}\vspace{0mm}]
{ Madhu N. Belur}
is at IIT Bombay since 2003, where he currently is a professor in the
Department of Electrical Engineering. His interests include dissipative
dynamical systems, graph theory and
open-source implementation for various applications.
\end{IEEEbiography}
}
\end{document}